\def\ps@headings{%
\def\@oddhead{\mbox{}\scriptsize\rightmark \hfil \thepage}%
\def\@evenhead{\scriptsize\thepage \hfil \leftmark\mbox{}}%
\def\@oddfoot{}%
\def\@evenfoot{}}
\newtheorem{lemma}{\bf{Lemma}}
\newtheorem{thm}{\bf{Theorem}}
\def\profilr{${\textrm{\textsc{Profil}}_R}~$}
\def\spotrv{${\textrm{\textsc{Spotr}}_V}~$}
\newenvironment{icompact}
{
\begin{list}
        {\labelitemi}
        {\leftmargin=0em \itemindent=1em \itemsep=1pt \parskip=0pt \parsep=0pt}
}
{\end{list}}
\begin{document}

\title{Eat the Cake and Have It Too: Privacy Preserving Location Aggregates in Geosocial Networks}

\author{
\IEEEauthorblockN{Bogdan Carbunar, Mahmudur Rahman, Jaime Ballesteros, Naphtali Rishe}
\IEEEauthorblockA{School of Computing and Information Sciences\\
Florida International University\\
Miami, Florida 33199\\
Email: \{carbunar, mrahm004, jball008, rishen\}@cs.fiu.edu}
}

\maketitle

\begin{abstract}
Geosocial networks are online social networks centered on the locations of
subscribers and businesses. Providing input to targeted advertising, profiling
social network users becomes an important source of revenue. Its natural
reliance on personal information introduces a trade-off between user privacy
and incentives of participation for businesses and geosocial network providers.
In this paper we introduce \textit{location centric profiles} (LCPs),
aggregates built over the profiles of users present at a given location.  We
introduce \profilr, a suite of mechanisms that construct LCPs in a private and
correct manner. We introduce iSafe, a novel, context aware public safety
application built on \profilr. Our Android and browser plugin implementations
show that \profilr is efficient: the end-to-end overhead is small even under
strong correctness assurances.
\end{abstract}

\section{Introduction}

Online social networks have become a significant source of personal
information.  Facebook alone is used by more than 1 out of 8 people today.
Social network users voluntarily reveal a wealth of personal data, including
age, gender, contact information, preferences and status updates. A recent
addition to this space, geosocial networks (GSNs) such as Yelp~\cite{Yelp},
Foursquare~\cite{foursquare} or Facebook Places~\cite{FBPlaces}, further
provide access even to personal locations, through \textit{check-ins}
performed by users at visited venues.

From the user perspective, personal information allows GSN providers to offer
targeted advertising and venue owners to promote their business through
spatio-temporal incentives (e.g., rewarding frequent customers through
accumulated badges). The profitability of social network providers and
participating businesses rests on their ability to collect, build and
capitalize upon customer and venue profiles.  Profiles are built based on user
information -- the more detailed the better.
Providing personal information exposes however users to significant risks, as
social networks have been shown to leak~\cite{KW10} and even
sell~\cite{FBPrivacy} user data to third parties. Conversely, from the provider
and business perspective, being denied access to user information discourages
participation.  There exists therefore a conflict between the needs of users
and those of providers and participating businesses: Without privacy people may
be reluctant to use geosocial networks, without feedback the provider and
businesses have no incentive to participate.

In this paper we take first steps toward breaking this deadlock, by introducing
the concept of \textit{location centric profiles} (LCPs). LCPs are
aggregate statistics built from the profiles of (i) users that have visited a
certain location or (ii) a set of co-located users.

We introduce \profilr, a framework that allows the construction of LCPs based
on the profiles of present users, while ensuring the privacy and correctness of
participants. Informally, we define privacy as the inability of venues and the
GSN provider to accurately learn user information, including even anonymized
location trace profiles. Thus, location privacy is an inherent \profilr
requirement.

Correctness is a by-product of privacy: under the cover of privacy users may
try to bias LCPs. We consider two correctness components (i) location
correctness -- users can only contribute to LCPs of venues where they are
located and (ii) LCP correctness -- users can modify LCPs only in a predefined
manner.  Location correctness is an issue of particular concern. The use of
financial incentives by venues to reward frequent geosocial network customers,
has generated a surge of fake check-ins~\cite{cheat1}.  Even with GPS
verification mechanisms in place, committing location fraud has been largely
simplified by the recent emergence of specialized applications for the most
popular mobile eco-systems (LocationSpoofer~\cite{LocationSpoofer} for iPhone
and GPSCheat~\cite{GPSCheat} for Android).

We propose first a venue centric \profilr. To relieve the GSN provider from a
costly involvement in venue specific activities, \profilr stores and builds
LCPs at venues. Participating venue owners need to deploy an inexpensive device
inside their business, allowing them to perform LCP related activities
\textit{and} verify the physical presence of participating users. We extend
\profilr with the notion of \textit{snapshot} LCPs, built by user devices from
the profiles of co-located users, communicated over ad hoc wireless
connections.  Snapshot LCPs are not bound to venues, but instead user devices
can compute LCPs of neighbors at any location of interest.  \profilr relies on
(Benaloh's) homomorphic cryptosystem and zero knowledge proofs to enable
oblivious and provable correct LCP computations.

We further introduce iSafe, a context aware safety application, that uses
\profilr to privately build safety LCPs. The constant population density
increase, and the recent surge of natural and man-made disasters, riots and
lootings, make safety aware applications of paramount importance.  The goal of
iSafe is to make users aware of the safety of their surroundings while
preserving the privacy of participants. Safety information can empower a suite
of applications, including safe walking/evacuation directions and safety
dependent mobile authentication.

We implemented iSafe and \profilr as mobile application and browser plugin
components.  Our experiments show that on a smartphone, with a client cheating
probability of 1 in a million, the end-to-end overhead of an LCP update
operation is 2.5s.  We further rely on data collected from Yelp~\cite{Yelp}, a
popular geosocial network, to build user and venue safety labels. The iSafe
browser plugin introduces an overhead of under 1s for collecting and processing
500 Yelp reviews.

The paper is organized as follows. Section~\ref{sec:model} describes the system
and adversary model and defines the problem. Section~\ref{sec:profilr}
introduces \profilr and proves its privacy and correctness.
Section~\ref{sec:realtime} introduces snapshot LCPs and presents the
distributed, real-time variant of \profilr.  Section~\ref{sec:isafe} introduces
iSafe and its implementation.  Section~\ref{sec:eval} evaluates the performance
of the proposed constructs.  Section~\ref{sec:related} describes related work
and Section~\ref{sec:conclusions} concludes.

\section{Background and Model}
\label{sec:model}

\begin{figure}
\centering
\subfigure[]
{\label{fig:distrev}{\includegraphics[width=1.63in,height=1.35in]{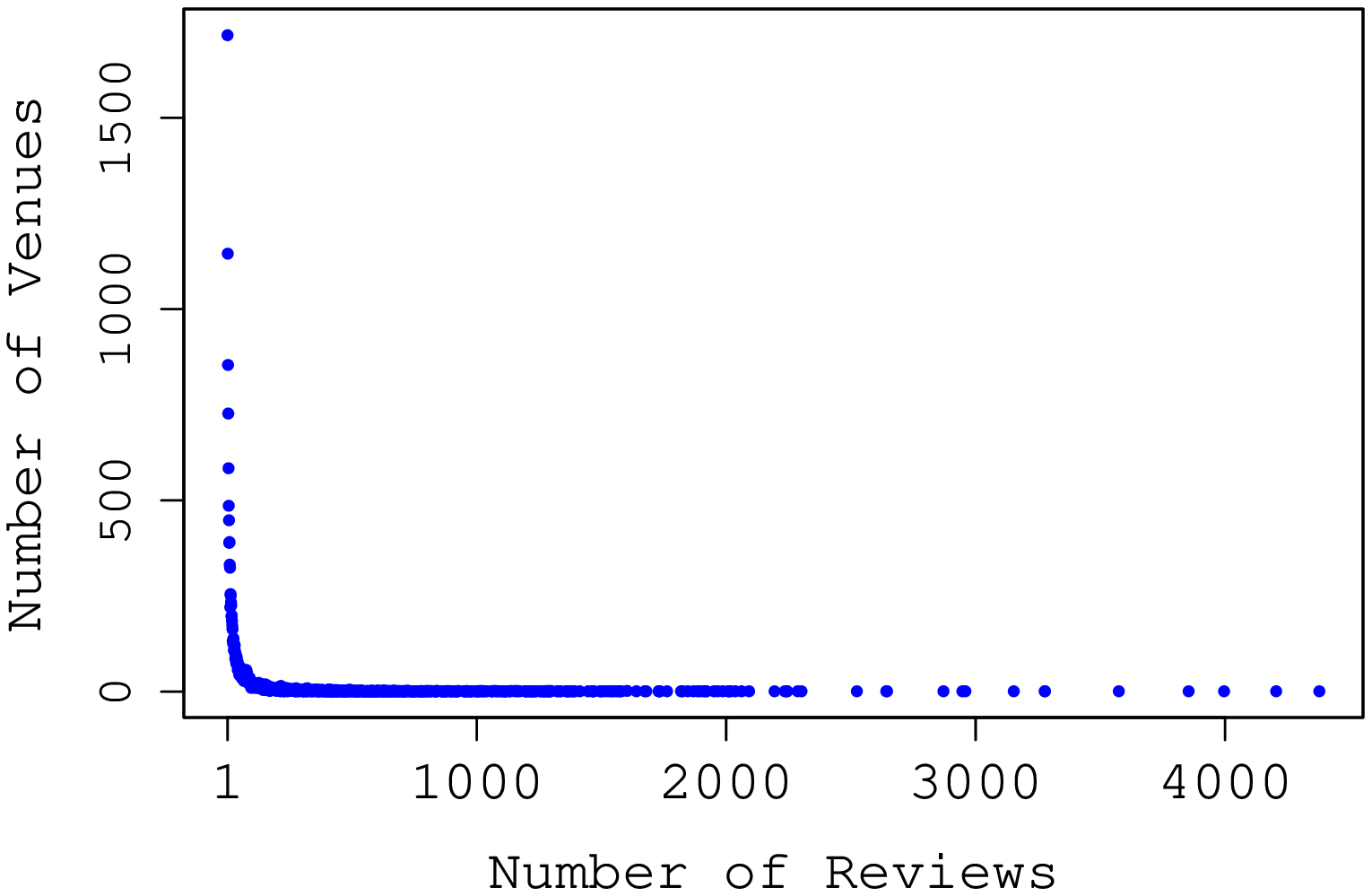}}}
\subfigure[]
{\label{fig:distDist}{\includegraphics[width=1.63in,height=1.35in]{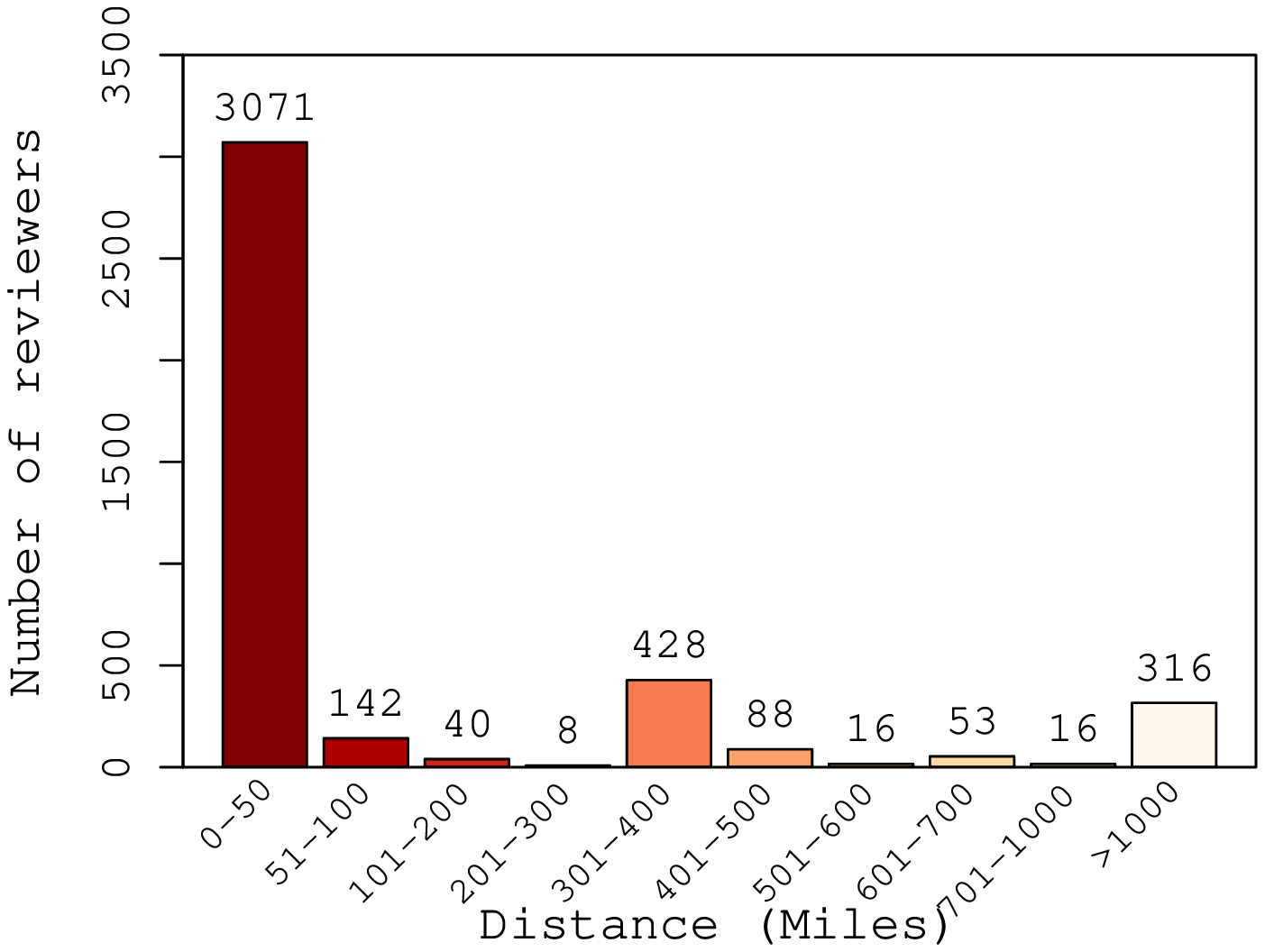}}}
\caption{Yelp venue stats:
(a) Distribution of the number of Yelp reviews per venue.
(b) Distribution of the distance from the venue ``Ike's Place'' to the home
cities of its reviewers.}
\end{figure}

We model the geosocial network (GSN) after Yelp~\cite{Yelp}. It consists of a
provider, $S$, hosting the system along with information about registered
venues, and serving a number of subscribers. To use the provider's services, a
client application needs to be downloaded and installed. Users register and
receive initial service credentials, including a unique user id.  We use the
terms $\emph{subscriber}$ and $\emph{user}$ interchangeably to refer to users
of the service and the term $\emph{client}$ to denote the software provided by
the service and installed by users on their devices.

The provider supports a set of businesses or venues, with an associated
geographic location (e.g., restaurants, yoga classes, towing companies, etc).
Users are encouraged to write reviews for visited locations, as well as
report their location, through \textit{check-ins} at venues where
they are present.

Participating venue owners need to install inexpensive equipment (e.g., a \$25
Raspberry PI~\cite{RaspberryPI}, a BeagleBoard or any Android smartphone).
%
Such equipment can also be used for other tasks including detecting fake user
check-ins~\cite{CP12} and preventing fake badges and incorrect rewards, and
validating social network (e.g., Yelp~\cite{Yelp}) reviews, thus eliminating
fake negative reviews. The advantages provided by such solutions can motivate
the small investment.

We have collected data from 16,199 venues throughout the U.S..  Besides the
name, location and type of venue, we have also collected all the reviews
provided for these venues, for a total of 1,096,044 reviews. For each review we
extracted the reviewer id, the date the review was written and the number of
check-ins performed. Moreover, we have collected data from 10,031 Yelp users,
including their id, location, number of friends and reviews, for a total of
646,017 reviews. Figure \ref{fig:distrev} shows the long-tail distribution of
the number of reviews per venue, for the collected venues.


\subsection{Location Centric Profiles}


Each user has a profile $P_U = \{u_1, u_2, .., u_d \}$, consisting of values
on $d$ dimensions (e.g., age, gender, home city, etc). Each dimension
has a range, or a set of possible values. Given a set of users $\mathcal{U}$
at location $L$, the \textit{location centric profile} at $L$, denoted by
$LCP(L)$ is the set $\{S_1, S_2, .., S_d \}$, where $S_i$ denotes the aggregate
statistics over the $i$-th dimension of profiles of users from $\mathcal{U}$.

In the following, we focus on a single profile dimension, $D$. We assume $D$
takes values over a range $R$ that can be discretized into a finite set of
sub-intervals (e.g., set of continuous disjoint intervals or discrete values).
Then, given an integer $b$, chosen to be dimension specific, we divide $R$ into
$b$ intervals/sets, $R_1, .., R_b$. For instance, gender maps naturally to
discrete values ($b=2$), while age can be divided into disjoint
sub-intervals, with a higher $b$ value.
We define the aggregate statistics $S$ for dimension $D$ of
$LCP(L)$ to consist of $b$ counters $c_1, .., c_k$; $c_i$ records the
number of users from $\mathcal{U}$ whose profile value on dimension $D$
falls within range $R_i$, $i=1..b$.

Figure \ref{fig:distDist} illustrates an LCP dimension: the distribution of the
(great-circle) distance in miles from a venue (``Ike's Place'' in San
Francisco, CA) and the home cities of its (4000+) reviewers.  Note that more
than 3000 reviews were left by locals, information that can be used by the
venue to better cater to its customers.



\subsection{Private LCP Requirements}
\label{sec:model:reqs}

\begin{figure}
\centering
\includegraphics[width=2.9in]{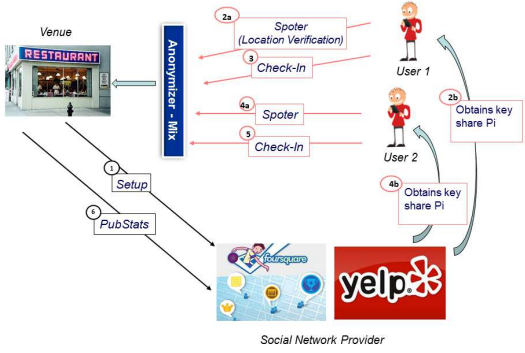}
\caption{Solution architecture ($k$=2). The red arrows denote anonymous
communication channels, whereas black arrows indicate authenticated
(and secure) communication channels.
\label{fig:solution}}
\end{figure}

We define a private LCP solution to be a set of functions, $PP(k)$ = $\{Setup$,
$Spoter$, $CheckIn$, $PubStats\}$, see Figure~\ref{fig:solution}. $Setup$ is
run by each venue where user statistics are collected, to generate parameters
for user check-ins.  To perform a check-in, a user first runs $Spoter$, to
prove her physical presence at the venue. $Spoter$ returns error if the
verification fails, success otherwise. If $Spoter$ is successful, $CheckIn$ is
run between the user and the venue, and allows the collection of profile
information from the user.  Specifically, if the user's profile value $v$ on
dimension $D$ falls within the range $R_i$, the counter $c_i$ is incremented by
1. Finally, $PubStats$ publishes collected LCPs.

Let $C_V$ be the set of counters defined at a venue $V$. Let $\bar{C}_V$ denote
the set of $b$ sets of counters derived from $C_V$, such that each set in
$\bar{C}_V$ has exactly one counter incremented over the set $C_V$. A private
LCP solution needs to satisfy the following properties:

\noindent
{\bf Location Correctness:}
Let $\mathcal{A}$ denote an adversary that controls the GSN provider and any
number of users. Let $\mathcal{C}$ be a challenger that controls a venue $V$.
$\mathcal{A}$ running as a user $U$ not present at $V$, has negligible
probability to successfully complete $Spoter$ at $V$.

\noindent
{\bf LCP Correctness:}
Let $\mathcal{A}$ denote an adversary that controls the GSN provider and any
number of users. Let $\mathcal{C}$ be a challenger that controls a venue $V$.
Let $C_V$ denote the set of counters at $V$ before $\mathcal{A}$ runs $CheckIn$
at $V$ and let $C'_V$ be the set of counters afterward. If $C'_V \notin
\bar{C}_V$, the $CheckIn$ completes successfully with only negligible
probability.

\noindent
{\bf $k$-Privacy:}
Let $\mathcal{A}$ denote an adversary that controls any number of venues and
let $\mathcal{C}$ denote a challenger controlling $k$ users.
$\mathcal{C}$ runs $Spoter$ followed by $CheckIn$ at a venue $V$ controlled by
$\mathcal{A}$ on behalf of $i < k$ users.  Let $C_i$ denote the resulting
counter set. For each $j=1..b$, $\mathcal{A}$ outputs $c'_j$, its guess of the
value of the $j$-th counter of $C_i$.  The advantage of $\mathcal{A}$,
$Adv(\mathcal{A}) = |Pr[C_i[j] = c'_j] - 1/(i+1)|$, defined for each $j=1..b$,
is negligible.

\noindent
{\bf Check-In Indistinguishability (CI-IND):}
Let a challenger $\mathcal{C}$ control two users $U_0$ and $U_1$ and let an
adversary $\mathcal{A}$ control any number of venues. $\mathcal{A}$ generates
randomly $q$ bits, $b_1, .., b_q$, and sends them to $\mathcal{C}$. For each
bit $b_i$, $i=1..q$, $\mathcal{C}$ runs $Spoter$ followed by $CheckIn$ on
behalf of user $U_{b_i}$. At the end of this step, $\mathcal{C}$ generates a random
bit $b$ and runs $Spoter$ followed by $CheckIn$ on behalf of $U_b$ at a venue
not used before.  $\mathcal{A}$ outputs a bit $b'$, its guess of $b$. The
advantage of $\mathcal{A}$, $Adv(\mathcal{A}) = |Pr[b' = b] - 1/2|$ is
negligible.


\subsection{Attacker Model}

We assume venue owners are malicious and will attempt to learn private
information from subscribers. Clients installed by users can be malicious,
attempting to bias LCPs constructed at target venues. We assume the GSN
provider does not collude with venues, but will try to learn private user
information.

\subsection{Cryptographic Tools}
\label{sec:model:tools}

\noindent
{\bf Homomorphic Cryptosystems.}
We use the Benaloh cryptosystem~\cite{B94}, an extension of the
Goldwasser-Micali~\cite{GM82}. It consists of three functions $(K, E, D)$, defined
as follows:


\begin{icompact}

\item
{\bf $K(k)$ - key generation}: $k$, an odd integer, is the size of the input
block. Select two large primes $p$ and $q$ such that $k | (p-1)$ and $gcd(k,
(p-1)/k) = 1$ and $gcd(k, q-1) = 1$. Let $n=pq$. Select $y \in \mathbb{Z}^*_n$,
such that $y^{(p-1)(q-1)/k}\ mod\ n \neq 1$. $n$ and $y$ are the public key and
$p$ and $q$ are the private key.

\item
{\bf $E(u,m)$}: Encrypt message $m \in \mathbb{Z}^*_k$, using a randomly chosen
value $u \in \mathbb{Z}^*_n$. Output $y^m u^k\ mod\ n$.

\item
{\bf $D(z)$}: Decrypt ciphertext $z$. Let $z = y^m u^k\ mod\ n$. If
$z^{(p-1)(q-1)/r} = 1$, then return $m=0$. Otherwise, for $i=1..k$, compute
$s_i = y^{-i} z\ mod\ n$. If $s_i = 1$, return $m=i$.

\end{icompact}


Benaloh's cryptosystem is additively homomorphic:\\ $E(u_1,m_1) E(u_2,m_2) =
E(u_1 u_2,m_1+m_2)$.  We further define the \textit{re-encryption} function
$RE(v, E(u,m))$ to be $y^m u^k v^k = E(u v, m)$. Note that the
re-encryption function can be invoked without knowledge of the message $m$.
Furthermore, it is possible to show that two ciphertexts are the encryption of
the same plaintext, without revealing the plaintext. That is, given $E(u,m)$
and $E(v,m)$, reveal $w = u^{-1} v$. Then, $E(v,m) = RE(w, E(u,m))$.

\noindent
{\bf Anonymizers.}
We use an anonymizer\cite{Chaum81,M98,PIK94,DMS04} that (i) operates correctly
-- the output corresponds to a permutation of the input and (ii) provides
privacy -- an observer is unable to determine which input element
corresponds to a given output element in any way better than guessing. In the
following we denote the anonymizer by $Mix$.

\noindent
{\bf Secret Sharing.}
Our constructions use a $(k,n)$ threshold secret sharing (TSS)~\cite{79shamir} solution. Given
a value $R$, TSS generates $n$ shares such that at least $k$ shares are needed
to reconstruct $R$. A $(k,n)$-TSS solution satisfies the property of
\textit{hiding}: An adversary (provided with access to a TSS oracle)
controlling the choice of two values $R_0$ and $R_1$ and given less than $k$
shares of $R_b$, $b \in_R \{0,1\}$, can guess the value of $b$ with
probability only negligible higher than 1/2.

\section{\profilr}
\label{sec:profilr}

Let \spotrv denote the device installed at venue $V$.  For each user profile
dimension $D$, \spotrv stores a set of \textit{encrypted counters} -- one for
each sub-range of $R$.

\paragraph*{\bf{Solution overview}}
Initially, and following each cycle of $k$ check-ins executed at venue $V$,
\spotrv initiates $Setup$, to request the provider $S$ to generate a new
Benaloh key pair. Thus, at each venue time is partitioned into \textit{cycles}:
a cycle completes once $k$ users have checked-in at the venue. The
communication during $Setup$ takes place over an authenticated and secure
channel (see Figure~\ref{fig:solution}).

When a user $U$ checks-in at venue $V$, it first engages in the $Spoter$
protocol with \spotrv. As shown in Figure~\ref{fig:solution}, this step
is performed over an anonymous channel, to preserve the user's (location)
privacy. $Spoter$ allows the venue to verify $U$'s physical presence
through a challenge/response protocol between \spotrv and the user device.
Furthermore, a successful run of $Spoter$ provides $U$ with a share of the
secret key employed in the Benaloh cryptosystem of the current cycle.
For each venue and user profile dimension, $S$ stores a set $Sh$ of shares
of the secret key that have been revealed so far.

Subsequently, $U$ runs $CheckIn$ with \spotrv, to send its share of the secret
key and to receive the encrypted counter sets. As shown in
Figure~\ref{fig:solution}, the communication takes place over an anonymous
channel to preserve $U$'s privacy. During $CheckIn$, for each dimension $D$,
$U$ increments the counter corresponding to her range, re-encrypts all counters
and sends the resulting set to \spotrv.  $U$ and \spotrv engage in a zero
knowledge protocol that allows \spotrv to verify $U$'s correct behavior:
exactly one counter has been incremented.  \spotrv stores the latest, proved to
be correct encrypted counter set, and inserts the secret key share into the set
$Sh$.

Once $k$ users successfully complete the $CheckIn$ procedure, marking the end
of a cycle, \spotrv runs $PubStats$ to reconstruct the private key, decrypt all
encrypted counters and publish the tally. The communication during $PubStats$
takes place over an authenticated channel (see Figure~\ref{fig:solution}).


\subsection{The Solution}

Let $C_i$ denote the set of encrypted counters at $V$, following the $i$-th
user run of $CheckIn$.  $C_i = \{ C_i[1],..,C_i[b] \}$, where $C_i[j]$ denotes
the encrypted counter corresponding to $R_j$, the $j$-th sub-range of $R$.  We
write $C_i[j] = E(u_j,u'_j,c_j,j)$ = $[E(u_j, c_j), E(u'_j, j)]$, where $u_j$
and $u'_j$ are random obfuscating factors and $E(u,m)$ denotes the Benaloh
encryption of message $m$ using random factor $u$.  That is, an encrypted
counter is stored for each sub-range of domain $R$ of dimension $D$. The
encrypted counter consists of two records, encoding the number of users whose
values on dimension $D$ fall within a particular sub-range of $R$.

Let $RE(v_j,v'_j,E(u_j,u'_j,c_j,j)$ denote the re-encryption of the $j$-th
record with two random values $v_j$ and $v'_j$:\\
$RE(v_j,v'_j,E(u_j,u'_j,c_j,j))$ = $[RE(v_j,E(u_j, c_j))$, $RE(v'_j,E(u'_j,
j))]$ = $[E(u_j v_j, c_j), E(u'_j v'_j, j)]$.  Let $C_i[j]++ =
E(u_j,u'_j,c_j+1,j)$ denote the encryption of the incremented $j$-th counter.
Note that incrementing the counter can be done without decrypting $C_i[j]$ or
knowing the current counter's value: $C_i[j]++$ = $[E(u_j, c_j) y, E(u'_j, j)]$
= $[y^{c_j+1} u_j^r, E(u'_j, j)]$ = $[E(u_j, c_j+1), E(u'_j, j)]$.

In the following we use the above definitions to introduce \profilr. \profilr
instantiates $PP(k)$, where $k$ is the privacy parameter.  The notation
$P(A(params_A), B(params_B))$ denotes the fact that protocol $P$ involves
participants $A$ and $B$, each with its own parameters.

\paragraph*{\bf{Setup}(V(),S($k$)):}
The provider $S$ runs the key generation function $K(k)$ of the Benaloh
cryptosystem (see Section~\ref{sec:model:tools}). Let $p$ and $q$ be the
private key and $n$ and $y$ the public key.  $S$ sends the public key to
\spotrv. \spotrv generates a signature key pair and registers the public key
with $S$. For each user profile dimension $D$ of range $R$, \spotrv performs
the following steps:


\begin{icompact}

\item
Initialize counters $c_1,..,c_b$ to 0. $b$ is the number of $R$'s sub-ranges.

\item
Generate $C_0 =\{E(x_1,x'_1,c_1,1),..,E(x_b,x'_b,c_b,b)\}$, where
$x_i,x'_i$, $i=1..b$ are randomly chosen values.
Store $C_0$ indexed on dimension $D$.

\item
Initialize the share set $S_{key} = \emptyset$.

\end{icompact}


\paragraph*{\bf{Spoter}(U(K),V(),S($k$)):}
$U$ sets up an anonymous connection with \spotrv, e.g., by using
fresh, random MAC and IP address values. \spotrv initiates a challenge/response
protocol, by sending to $U$ the currently sampled time $T$, an expiration
interval $\Delta T$ and a fresh random value $R$. The user's device generates a
hash of these values and sends the result back to \spotrv.  \spotrv ensures
that the response is received within a specific interval from the challenge (see
Section~\ref{sec:eval} for values and discussion). If
the verification succeeds, \spotrv uses its private key to sign a timestamped
token and sends the result to $U$. $U$ contacts $S$ over $Mix$ and sends the
token signed by \spotrv. $S$ verifies $V$'s signature as well as the freshness
(and single use) of the token. Let $U$ be the $i$-th user checking-in at $V$.
If the verifications pass and $i \le k$, $S$ uses the $(k,n)$ TSS to compute a
share of $p$ (Benaloh secret key, factor of the modulus $n$). Let $p_i$ be the
share of $p$. $S$ sends the (signed) share $p_i$ to $U$. If $i > k$, $S$ calls
$Setup$ to generate new parameters for $V$. 

\paragraph*{\bf{CheckIn}(U($p_i$, n, V), V(n, y, $C_{i-1}$, $S_{key}$))}:
$U$ uses the same random MAC and IP addresses as in the previous $Spoter$ run.
Executes only if the previous run of $Spoter$ is successful. Let $U$ be the
$i$-th user checking-in at $V$. Then, $C_{i-1}$ is the current set of encrypted
counters. \spotrv sends $C_{i-1}$ to $U$. Let $v$, $U$'s value on dimension
$D$, be within $R$'s $j$-th sub-range, i.e., $v \in R_j$. $U$ runs the following
steps:


\begin{icompact}

\item
Generate $b$ pairs of random values $\{ (v_1,v'_1),..,(v_b,v'_b) \}$.  Compute
the new encrypted counter set $C_i$, where the order of the counters in $C_i$
is identical to $C_{i-1}$: $C_i$ =\\ $\{ RE(v_l,v'_l,C_{i-1}[l])| l=1..b,l \neq
j\}$ $\cup$ $RE(v_j,v'_j,C_{i-1}[j]++)$.

\item
Send $C_i$ along with the signed (by $S$) share $p_i$ of the private key $p$ to
$V$.

\end{icompact}


\noindent
If \spotrv successfully verifies the signature of $S$ on the share $p_i$, $U$
and \spotrv engage in a zero knowledge protocol ZK-CTR (see
Section~\ref{sec:profilr:zkp}).  ZK-CTR allows $U$ to prove that $C_i$
is a correct re-encryption of $C_{i-1}$: only one counter of $C_{i-1}$ has
been incremented. If the proof verifies, \spotrv replaces $C_{i-1}$ with $C_i$
and ads the share $p_i$ to the set $S_{key}$.

\paragraph*{\bf{PubStats}(V($C_k$,Sh,V),S(p,q))}:
\spotrv performs the following actions:


\begin{icompact}

\item
If $|Sh| < k$, abort.

\item
If $|Sh| = k$, use the $k$ shares to reconstruct $p$, the private Benaloh key.

\item
Use $p$ and $q = n/p$ to decrypt each record in $C_k$, the final set of
counters at $V$. Publish results.

\end{icompact}

\vspace{5pt}

\subsection{ZK-CTR: Proof of Correctness} 
\label{sec:profilr:zkp}


We now present the zero knowledge proof of the set $C_i$ being a correct
re-encryption of the set $C_{i-1}$, i.e., a single counter has been incremented.
Let ZK-CTR(i) denote the protocol run for sets $C_{i-1}$ and $C_i$. $U$ and
\spotrv run the following steps $s$ times:


\begin{icompact}

\item
$U$ generates random values $(t_1,t'_1),..,(t_b,t'_b)$ and random permutation
$\pi$, then sends to \spotrv the proof set
$P_{i-1} = \pi \{ RE(t_l,t'_l,C_{i-1}[l]), l = 1..b \}$.

\item
$U$ generates random values $(w_1,w'_1),..,(w_b,w'_b)$, then sends to \spotrv the
proof set $P_i = \pi \{ RE(w_l,w'_l,C_i[l]), l = 1..b \}$

\item
\spotrv generates a random bit $a$ and sends it to $U$.

\item
If $a=0$, $U$ reveals random values $(t_1,t'_1),..,(t_b,t'_b)$ and
$(w_1,w'_1),..,(w_b,w'_b)$.  \spotrv verifies that for each $l=1..b$,
$RE(t_l,t'_l,C_{i-1}[l])$ occurs in $P_{i-1}$ exactly once, and that for each
$l=1..b$, $RE(w_l,w'_l,C_i[l])$ occurs in $P_i$ exactly once.

\item
If $a=1$, $U$ reveals $o_l = v_l w_l t^{-1}_l$ and $o'_l = v'_l w'_l
t'^{-1}_l$, for all $l=1..b$ along with $j$, the position in $P_{i-1}$ and
$P_i$ of the incremented counter. \spotrv verifies that for all $l=1..b, l \neq
j$, $RE(o_l, o'_l, P_{i-1}[l]) = P_i[l]$ and $RE(o_j, o'_j, P_{i-1}[j] y)
= P_i[j]$.

\item
If any verification fails, \spotrv aborts the protocol.

\end{icompact}


\subsection{Preventing Illegal Votes}


For simplicity of presentation, we have avoided the Sybil attack problem:
participants that cheat through multiple accounts they control or by exploiting
the anonymizer. For instance, a rogue venue owner, controlling $k$-1 Sybil user
accounts or simulating $k$-1 check-ins, can use \profilr to reveal the profile
of a real user. Conversely, a rogue user (including the venue) could bias the
statistics built by the venue (and even deny service) by checking-in multiple
times in a short interval.

Sybil detection techniques (see Section~\ref{sec:related}) can be used to
control the number of fake, Sybil accounts.  However, the use
of the anonymizer prevents the provider and the use of the unique IP
and MAC addresses prevents the venue from differentiating
between interactions with the same or different accounts. In this section we
propose a solution, that when used in conjunction with Sybil detection tools,
mitigates this problem. The solution introduces a trade-off between privacy and
security.

Specifically, we divide time into epochs (e.g., one day long). A user can
check-in at any venue at most once per epoch.  When  active, once per epoch
$e$, each user $U$ contacts the provider $S$ over an authenticated channel. $U$
and $S$ run a blind signature~\cite{blind} protocol: $U$ obtains the signature of $S$ on a
random value, $R_{U,e}$. $S$ does not sign more than one value for $U$ for any
epoch. In runs of $Spoter$ and $CheckIn$ during epoch $e$, $U$ uses $R_{U,e}$
as its pseudonym (i.e., MAC and IP address). Venues can verify the validity of
the pseudonym using $S$'s signature.  A venue accepts a single $CheckIn$ per
epoch from any pseudonym, thus limiting the user's impact on the LCP.



\subsection{Analysis}


Given a set of encrypted counters $C$, let $\bar{\mathcal{C}}$ denote the set
of re-encryptions of records of $C$, where only one record has its counter
incremented. We introduce the following theorem.

\begin{thm}
ZK-CTR(i) is a ZK proof of $C_i \in \bar{\mathcal{C}}_{i-1}$.
\label{thm:zk}
\end{thm}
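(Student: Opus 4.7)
The plan is to establish the three standard properties of an interactive zero-knowledge proof---completeness, soundness, and (computational, honest-verifier) zero-knowledge---for a single round of ZK-CTR(i); running $s$ independent rounds then amplifies soundness to error at most $2^{-s}$ by standard composition.

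Completeness is a direct algebraic check: for $a=0$ the revealed quadruples reproduce the sets $P_{i-1}$ and $P_i$ element-wise as expected by \spotrv, and for $a=1$ the openings $(o_l,o'_l)$ satisfy the stated Benaloh re-encryption identities by construction, both for $l\neq j$ and, using the increment-by-$y$ identity, for $l=j$. For soundness I will argue that if a (possibly cheating) prover's commitment pair $(P_{i-1},P_i)$ admits valid openings under \emph{both} challenges, then necessarily $C_i\in\bar{\mathcal{C}}_{i-1}$. The $a=0$ opening certifies that $P_{i-1}$ is a permutation $\pi_1$ of re-encryptions of $C_{i-1}$ and that $P_i$ is a permutation $\pi_2$ of re-encryptions of $C_i$. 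The $a=1$ opening certifies, positionally, that $P_i[l]$ re-encrypts the same plaintext pair as $P_{i-1}[l]$ for $l\neq j$, and that $P_i[j]$ re-encrypts $P_{i-1}[j]\cdot y$, i.e., count incremented by one with range-index unchanged. Crucially, because each encrypted counter packages its range index into the auxiliary Benaloh ciphertext $E(u'_l,l)$ laid down at $Setup$, positional preservation of this index component under the $a=1$ check forces $\pi_1=\pi_2$; this is the step I expect to need the most care, since the $a=0$ branch by itself only checks set membership and would leave the two permutations decoupled. Having pinned the permutation down to a single $\pi$, one concludes that $C_i$ differs from $C_{i-1}$ only in that the counter at position $\pi(j)$ is incremented while all others are merely re-encrypted, which is exactly the statement $C_i\in\bar{\mathcal{C}}_{i-1}$. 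Since \spotrv's bit $a$ is sampled uniformly and independently of the commitments, any prover for which no consistent $\pi$ exists succeeds in a round with probability at most $1/2$, yielding the claimed $2^{-s}$ bound after $s$ rounds.

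For zero-knowledge I will exhibit an expected polynomial-time simulator $\mathsf{Sim}$ that proceeds per round. $\mathsf{Sim}$ guesses a bit $a^*$. If $a^*=0$, it samples a random permutation and forms $P_{i-1},P_i$ as fresh re-encryptions of $C_{i-1},C_i$ under that same permutation, retaining all randomness so the canonical $a=0$ opening is available; the resulting distribution is identical to an honest prover's. If $a^*=1$, it samples $P_{i-1}$ as a fresh permuted re-encryption of $C_{i-1}$, picks an arbitrary index $j^*$, and forms $P_i$ by multiplying $P_{i-1}[j^*]$ by $y$ and re-encrypting each position with fresh randomness that it retains as the canonical $a=1$ opening. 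After observing the verifier's $a$, $\mathsf{Sim}$ outputs the transcript if $a=a^*$ and rewinds otherwise, giving two expected attempts per round. In the $a^*=1$ branch the commitments differ from real ones only in which plaintext slot of $C_{i-1}$ gets incremented---the openings themselves have the same uniform conditional distribution in both worlds---so indistinguishability of the full transcript reduces to the semantic security of the Benaloh cryptosystem established in Section~\ref{sec:model:tools}.
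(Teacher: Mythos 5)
Your proposal is correct and covers the same three obligations as the paper, but it takes a genuinely different route on soundness and is more careful than the paper on zero-knowledge. For soundness the paper argues by case analysis on the cheating prover's strategy (commit for $a=0$ or for $a=1$) and shows the opposite challenge forces the algebraic identity $y^{c_j}(u_j\alpha_j)^k = y^{c'_j}v_j'^k \bmod n$, hence $c'_j = c_j + zk$, which is impossible in $\mathbb{Z}^*_k$ for $z\neq 0$; you instead give a special-soundness extraction argument (answerability of both challenges on one commitment implies $C_i\in\bar{\mathcal{C}}_{i-1}$). Your version has the advantage of explicitly handling the permutation-decoupling issue via the index ciphertexts $E(u'_l,l)$ --- a gap the paper does not address, since its analysis assumes a single bad record and a common $\pi$ --- but note that your claim that a positional re-encryption ``certifies the same plaintext'' is exactly the paper's algebraic fact in disguise: you still need that two Benaloh ciphertexts related by multiplication with a $k$-th power must encrypt the same residue class, so that step should be spelled out rather than taken for free. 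On zero-knowledge both proofs use the guess-the-challenge-and-rewind simulator with expected two attempts per round; the paper asserts the simulated and real distributions are \emph{identical}, whereas you claim only computational indistinguishability, reducing the $a^*=1$ branch to Benaloh's semantic security. Your weaker claim is actually the defensible one: the simulator's $a^*=1$ commitments increment a random slot $j^*$ rather than the true slot, so the underlying plaintext multisets differ and the transcripts cannot be identically distributed --- the paper's ``identical distribution'' lemma is overstated on this point. One small inconsistency on your side: you announce honest-verifier ZK but then build the rewinding simulator for an arbitrary verifier $S^*$, which is what the theorem (and the paper) actually requires; drop the ``honest-verifier'' qualifier.
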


\begin{proof}
We need to prove completeness, soundness and zero-knowledge. For completeness,
if $C_i \in \bar{\mathcal{C}}_{i-1}$, in each of the $s$ steps, $U$ succeeds
to convince $S$, irrespective of the challenge bit $a$. If $a=0$, $U$ can produce
the random obfuscating values proving that the proof sets $P_{i-1}$ and $P_i$
are correctly generated from $C_{i-1}$ and $C_i$. If $a=1$, $U$ can build
the obfuscating factors proving that $P_i \in \bar{\mathcal{P}}_{i-1}$.

For soundness, we need to prove that if $C_i \notin \bar{\mathcal{C}}_{i-1}$,
$U$ cannot convince $S$ unless with negligible probability. For simplicity
reasons, we assume $C_i \notin \bar{\mathcal{C}}_{i-1}$ due to a single record
in $C_i$ being ``bad'': $C_{i-1}[j] = E(u_j,u'_j,c_j,j)$ and $C_i[j] =
E(v_j,v'_j,c'_j,j')$. In any round of the ZK-CTR protocol, $U$ has two options
for cheating.  First, $U$ could count on the bit $a$ to come up 0. Then, $U$
builds $P_{i-1}[j] = E(u_j t_j, u'_j t'_j, c_j, j)$ and $P_i[j] = E(v_j w_j,
v'_j w'_j, c'_j, j')$. If however $a=1$, $U$ has to come up with a value
$\alpha_j$, such that $RE(\alpha_j, E(u_j,c_j) = E(v'_j,c'_j)$ or $RE(\alpha_j,
E(u_j,c_j+1) = E(v'_j,c'_j)$. In the first case, this means $y^{c_j} (u_j
\alpha_j)^k = y_{c'_j} v'^k_j\ mod\ n$. Without knowing $n$'s factorization,
$U$ cannot compute $k$'s inverse modulo $\phi(n)$. Then, the equation is
satisfied only if $c'_j = c_j + z k$, for an integer $z$. Note however that
Benaloh's cryptosystem only works for values in $\mathbb{Z}^*_k$, making this
condition impossible to satisfy. The second case is similar. The second
cheating option is to assume $a$ will be 1 and build $P_i[j]$ to be a
re-encryption of $P_{i-1}[j]$.  It is then straightforward to see that if
$a=0$, $U$ can only succeed in convincing $S$, if $c'_j = c_j + z k$, which
we have shown is impossible for $z \neq 0$. Thus, in each round, $U$ can
only cheat with probability 1/2. Following $s$ rounds, this probability
becomes $1/2^s$.

We now show that ZK-CTR conveys no knowledge to any verifier, even one that
deviates arbitrarily from the protocol.  We prove this by following the
approach from~\cite{GMR89,GMW91}.  Specifically, let $S^*$ be an arbitrary,
fixed, expected polynomial time ITM. We generate an expected polynomial time
machine $M^*$ that, without being given access to the client, produces an
output whose probability distribution is identical to the probability
distribution of the output of $<C,S^*>$.

We now build $M^*$ that uses $S^*$ as a black box many times. Whenever $M^*$
invokes $S^*$, it places input $x=(L_0,L_1)$ on its input tape $IT_S$ and a
fixed sequence of random bits on its random tape, $RT_S$. The input $x$
consists of $L_0 = C_0$ and $L_1 = C_1$.  The content of the input
communication tape for $S^*$, $CT_S$ will consist of tuples
$(P_{2i},P_{2i+1},\pi_i)$, where $P_{2i}$ and $P_{2i+1}$ are sets and $\pi_i$
is a permutation.  The output of $M^*$ consists of two tapes: the random-record
tape $RT_M$ and the communication-record tape $CT_M$. $RT_M$ contains the
prefix of the random bit string $r$ read by $S^*$. The machine $M^*$ works as
follows (round $i$):

\noindent
{\bf Step 1}
$M^*$ chooses a random bit $a \in_R \{0,1\}$. If $a=0$, $M^*$ picks a
random permutation $\pi_i$, generates $t_l,t'_l$, $l=1..b$ randomly and
computes $P_{2i} = \pi_i \{ RE(t_l,t'_l,C_{i-1}[l]), l = 1..b \}$.
It then generates random values $w_l,w'_l$, $l=1..b$, randomly and computes the
set $P_{2i+1} = \pi_i \{ RE(w_l,w'_l,C_i[l]), l = 1..b \}$. Note that $M^*$
does not need to know the counters to perform this operation.
If $a=1$, $M^*$ generates a random set $P_{2i}$, then generates
random values $o_l,o'_l$ randomly, $l=1..b$. It then generates
a random $j \in 1..b$ and computes $P_{2i+1}$ such that for all $l=1..b, l \neq j$,
$RE(o_l, o'_l, P_{2i}[l]) = P_{2i+1}[l]$ and for the $j$-th position,
$RE(o_j, o'_j, P_{2i}[j] y) = P_{2i+1}[j]$.

\noindent
{\bf Step 2}
$M^*$ sets\\
$b=S^*(x,r;P_0,P_1,\pi_0,..,P_{2i-2},P_{2i-1},\pi_{i-1},P_{2i},P_{2i+1})$. That
is, $b$ is the output of $S^*$ on input $x$ and random string $r$ after
receiving $i-1$ pairs $P_{2j},P_{2j+1},\pi_j)$, $j=1..i-1$ and proof
$P_{2i},P_{2i+1}$ on its communication tape $CT_S$.  We have the following
three cases.

(Case 1). $a=b=0$. $M^*$ can produce $t_l,t'_l,w_l,w'_l$, $l=1..b$ and $\pi_i$ to prove
that $P_{2i} = \pi_i \{ RE(t_l,t'_l,C_{i-1}[l]), l = 1..b \}$ and $P_{2i+1} =
\pi_i \{ RE(w_l,w'_l,C_i[l]), l = 1..b \}$.  $M^*$ sets $b_i$ to $b$, appends
the tuple $(P_{2i},P_{2i+1},\pi_i,b_i)$ to $CT_M$ and proceeds to the next
round (i+1).

(Case 2). $a=b=1$. $M^*$ can produce $o_l,o'_l$, $l=1..b$, and index j such
that $RE(o_l, o'_l, P_{2i}[l]) = P_{2i+1}[l]$, $l=1..b, l \neq j$ and $RE(o_j,
o'_j, P_{2i}[j] y) = P_{2i+1}[j]$. $M^*$ sets $b_i$ to $b$, appends the tuple
$(P_{2i},P_{2i+1},\pi_i,b_i)$ to $CT_M$ and proceeds to the next round (i+1).

(Case 3). $a \neq b$. $M^*$ discards all the values of the current
iteration and repeats the current round (Step 1 and 2).

If all rounds are completed, $M^*$ halts and outputs $(x,r',CT_M)$, where
$r'$ is the prefix of the random bits $r$ scanned by $S^*$ on input $x$.
We first prove that $M^*$ terminates in expected polynomial time and then
that the output distribution of $M^*$ is the same as the
output distribution of $S*$ when interacting with the client, on input $(L_0,L_1)$.

\begin{lemma}
$M^*$ terminates in expected polynomial time.
\end{lemma}

\begin{proof}
Given $C_0$ and $C_1$, during the $i$-th round $P_{2i}$ and $P_{2i+1}$ are
either built from $C_0$ and $C_1$ or from each other. During each run of round
$i$, the bit $a$ is chosen independently. Then $P_{2i}$ and $P_{2i+1}$ are also
chosen independently. This implies that the probability that $a=b$ is 1/2 and
the expected number of repetitions of round $i$ is 2. $S^*$ is expected
polynomial time, which implies that $M^*$ is also polynomial time.
\end{proof}

\begin{lemma}
The probability distribution of $<C,S^*>(L_0,L-1)>$ and of $M^*(L_0,L_1)$
are identical.
\end{lemma}

\begin{proof}
The output of $<C,S^*>(L_0,L_1)>$ and of $M^*(L_0,L_1)$ consists of a sequence
of $t$ tuples of format\\ $(P_{2i},P_{2i+1},\pi_i,b_i)$. Let
$\Pi_{M^*}^{(x,r,i)}$ and $\Pi_{CS^*}^{(x,r,i)}$ be the probability
distributions of the first $i$ tuples output by $M^*$ and $<C,S^*>$. We need
to show that for any fixed random input $r$,
$\Pi_{M^*}^{(x,r,t)}=\Pi_{CS^*}^{(x,r,t)}$. We prove this by induction.  The
base case, where $i=0$, holds immediately. In the induction step we assume that
$\Pi_{M^*}^{(x,r,i)}=\Pi_{CS^*}^{(x,r,i)}=T^{(i)}$. We need to prove that the
$i+1$st tuples in $\Pi_{M^*}^{(x,r,i+1)}$, denoted by $\Pi_{M^*}^{(i+1)}$ and
in $\Pi_{CS^*}^{(x,r,i+1)}$, denoted by $\Pi_{CS^*}^{(i+1)}$ have the same
distribution.
We show that $\Pi_{M^*}^{(i+1)}$ and $\Pi_{CS^*}^{(i+1)}$ are uniform over
the set
$V = \{ (P_{2i},P_{2i+1},\pi_i,b) | b=S^*(x,r,T^{(i)}||P) \wedge ((P_{2i}=\pi_i{RE(C_0)},
P_{2i+1}=\pi_i{RE(C_1)}$,
$if\ b=0)$ $\vee$ $(P_{2i+1}[l]=RE(P_{2i}[l]), l=1..b, l\neq j, P_{2i+1}[j]=y RE(P_{2i}[j]),\ if\ b=1) \}$.
For $\Pi_{CS^*}^{(i+1)}$, this is the case, by construction.
If $\Pi_{M^*}^{(i+1)}$ has output, it is also uniformly
distributed in $V$.
\end{proof}

$M^*$ terminates in expected polynomial time and its output has
the same distribution as the output of the interaction between $S^*$ and a
client. Thus, the theorem follows.
\end{proof}

We can now prove the following results.

\begin{thm}
\profilr is $k$-private.
\end{thm}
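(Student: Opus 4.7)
The plan is to show that the adversary's view throughout the $k$-privacy game is computationally independent of the plaintext values of the counters in $C_i$, via a sequence of hybrid reductions to the semantic security of the Benaloh cryptosystem, the hiding property of the $(k,n)$-threshold secret sharing, and the zero-knowledge property of ZK-CTR already established in Theorem~\ref{thm:zk}.

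First I would enumerate what the adversary (the venue, together with any colluding corrupted users) sees per honest check-in performed by $\mathcal{C}$: (i) an anonymous Spoter transcript consisting of a timestamp, random nonce, a hashed challenge response and a signed presence token; (ii) the signed secret-key share $p_l$ delivered by $S$; (iii) the re-encrypted counter set $C_l$; and (iv) a ZK-CTR transcript. Because $\mathcal{C}$ employs fresh anonymous MAC/IP identifiers for each user, the Spoter transcripts carry no profile-linked information and can be treated as independent of the counter vector. In hybrid $H_1$, I replace every ZK-CTR interaction with the output of the simulator $M^*$ constructed in the proof of Theorem~\ref{thm:zk}; by the zero-knowledge property the two distributions are identical.

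In hybrid $H_2$, I replace the collected shares $\{p_1,\ldots,p_i\}$ with shares of an independent random value. Since $i<k$, the hiding property of the $(k,n)$-TSS (instantiated with $R_0=p$ and $R_1$ a fresh random element) guarantees this swap is indistinguishable, so $\mathcal{A}$ retains no information about the true Benaloh secret key $p$. In hybrid $H_3$, I invoke the semantic security of Benaloh: each record in $C_l$ is produced by a re-encryption $RE(v_l,v'_l,\cdot)$ under fresh randomness, so the ciphertexts in $C_l$ are distributed as fresh encryptions of their underlying plaintexts under an unknown key. I replace every record in the final $C_i$ with an encryption of $0$; under the higher-residuosity assumption, $\mathcal{A}$'s view remains computationally indistinguishable from the real one, yet is now completely independent of the true counter vector.

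In the final hybrid the adversary's view is independent of the plaintext counters, so for each coordinate $j$ the best strategy is to sample from the a priori distribution over $\{0,1,\ldots,i\}$, producing a correct guess with probability $1/(i+1)$. Summing the negligible distinguishing gaps from $H_1$, $H_2$ and $H_3$ bounds $Adv(\mathcal{A})$ by a negligible function, yielding $k$-privacy. The main subtlety I expect will lie in $H_2$: I must argue that the shares released during Spoter really do behave as outputs of a TSS oracle and are not subtly correlated with user profiles. This is handled by noting that $S$ generates the Benaloh key and the share pool during Setup, strictly before any check-in, and Spoter only gates the release of an already-fixed share on a presence verification that does not touch profile data.
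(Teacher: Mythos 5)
Your proof is correct in substance and rests on exactly the same two hardness assumptions the paper invokes --- the semantic security of Benaloh and the hiding property of the $(k,n)$-TSS --- but it is organized quite differently. The paper gives a direct, case-split reduction: assuming a non-negligible advantage, it builds a single adversary $\mathcal{B}$ that wins either the Benaloh semantic-security game (by embedding the challenge ciphertext $E(u,M_d)$ in place of the $j$-th counter) or the TSS hiding game (by submitting $p$ and $q$ as the two candidate secrets and embedding the returned shares into $Spoter$). You instead run a game-hopping argument ($H_1$: simulate ZK-CTR; $H_2$: randomize the shares; $H_3$: replace ciphertexts with encryptions of $0$) and conclude that the final view is information-free. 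Your route buys two things the paper's sketch glosses over: first, you explicitly dispose of the ZK-CTR transcripts via the simulator $M^*$ from Theorem~\ref{thm:zk}, whereas the paper's $\mathcal{B}$ must somehow complete ZK-CTR for a counter set that is \emph{not} a legitimate re-encryption of its predecessor (it waves at this with ``runs ZK-CTR \ldots until it succeeds,'' which really needs the rewinding/simulation you make explicit); second, the hybrid structure makes the accounting of the total advantage cleaner. One small gap to patch: since $\mathcal{A}$ controls the venue it sees every intermediate set $C_1,\ldots,C_i$, not just the final one, and the increment pattern across consecutive sets is exactly the sensitive data; your $H_3$ should therefore hybridize over \emph{all} records of \emph{all} intermediate counter sets (each is a fresh encryption by the re-encryption randomization, so the extension is routine), not only over the final $C_i$.
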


\begin{proof}(Sketch)
Following the definition from Section~\ref{sec:model:reqs}, let us assume that
the adversary $\mathcal{A}$ has access to an encrypted counter set $C_{i}$
generated after $\mathcal{C}$ has run $Spoter$ followed by $CheckIn$ on behalf
of $i < k$ different users. The records of set $C_i$ are encrypted and
$\mathcal{A}$ has $i$ shares of the private key. For any $j=1..b$, let $c'_j$
be $\mathcal{A}$'s guess of the value of the $j$-th counter in $C_i$.  If
$|Pr[C_i[j] = c'_j] - 1/(k+1)| = \epsilon$ is non-negligible we can use
$\mathcal{A}$ to construct an adversary $\mathcal{B}$ that has $\epsilon$
advantage in the (i) semantic security game of Benaloh or in the (ii) hiding
game of the $(k,n)$ TSS. We start with the first reduction.  $\mathcal{B}$
generates two messages $M_0 = 0$ and $M_1 = 1$ and sends them to the challenger
$\mathcal{C}$. $\mathcal{C}$ picks a bit $d \in_R \{0,1\}$ and sends to $\mathcal{B}$ the
value $E(u, M_d)$, where $u$ is random and $E$ denotes Benaloh's encryption
function. $\mathcal{B}$ initiates a new game with $\mathcal{A}$, with counters
set to 0. $\mathcal{B}$ runs $Spoter$ and $CheckIn$ (acting as challenger) with
$\mathcal{A}$. $\mathcal{B}$ re-encrypts all counters from $\mathcal{A}$,
except the $j$-th one, which it replaces with $E(u, M_d)$. $\mathcal{B}$ runs
ZK-CTR with $\mathcal{A}$ (used as a black box) a polynomial number of times
until it succeeds. $\mathcal{A}$ outputs its guess of the values of all
counters. $\mathcal{B}$ sends the guess for the $j$-th counter to
$\mathcal{C}$. The advantage of $\mathcal{B}$ in this game comes entirely from
the advantage provided by $\mathcal{A}$.

For the second reduction, $\mathcal{B}$ runs $Setup$ as the provider and
obtains the secret key $p_0$ and $p_1$ (renamed from $p$ and $q$).
$\mathcal{B}$ sends $p_0$ and $p_1$ to the challenger $\mathcal{C}$, as its
choice of two random values. $\mathcal{C}$ generates a random bit $a$, uses the
$(k,n)$ TSS to generate $i < k$ shares of $p_a$, $sh_1, .., sh_i$, and sends
them to $\mathcal{B}$.  $\mathcal{B}$ generates a new random prime $q$ and
picks randomly a bit $d$.  Let the Benaloh modulus be $n = p_d q$. Then, acting
as $i$ different users, $U_j$, $j=1..i$ $\mathcal{B}$ runs $Spoter$ with $S$
(which it also controls) to obtain $S$'s signature on $sh_j$. For each of the
$i$ users, $\mathcal{B}$ runs $CheckIn$ with $\mathcal{A}$. At the end of the
process, $\mathcal{A}$ outputs its guess of the encrypted counters. If the
guess is correct on more than $d/(j+1)$ counters, $\mathcal{B}$ sends $d$ to
$\mathcal{C}$ as its guess for $a$. Otherwise, it sends $\bar{d}$. Thus,
$\mathcal{B}$'s advantage in the hiding game of TSS is equivalent to
$\mathcal{A}$'s advantage against \profilr.
\end{proof}

\begin{thm}
\label{thm:loc}
\profilr ensures location correctness.
\end{thm}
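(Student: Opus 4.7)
The plan is to identify the one step of $Spoter$ whose outcome the adversary cannot simulate, and reduce location correctness to the adversary's inability to drive that step from outside $V$. Although $\mathcal{A}$ controls the provider $S$ (so the signed share $p_i$ and $S$'s freshness/single-use checks can be produced or bypassed at will), the venue device \spotrv is controlled by $\mathcal{C}$, and \spotrv is the only party whose cooperation $\mathcal{A}$ cannot internally fake. \spotrv emits the signed, timestamped token -- and later admits the pseudonym into $CheckIn$ -- only after receiving the correct response to its fresh challenge within the window $\Delta T$. I would therefore take ``successful completion of $Spoter$'' to mean that \spotrv actually issued such a token under $\mathcal{A}$'s pseudonym, and bound the probability of that event.

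First I would handle the ``no information about $R$'' branch. The challenge tuple $(T, \Delta T, R)$ is broadcast only on the local wireless medium at $V$, and $R$ is freshly sampled with high entropy per $Spoter$ invocation. Modeling the response hash as a random oracle (or invoking preimage resistance together with the entropy of $R$), the probability that $\mathcal{A}$, having never seen $R$, produces the expected hash response is negligible in $|R|$. Hence any strategy with non-negligible success must involve some party that heard $R$ locally forwarding it to $\mathcal{A}$.

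The main obstacle is therefore the relay attack: a confederate physically inside $V$ overhears $R$, tunnels it to $\mathcal{A}$ at an arbitrary remote location, and echoes back $\mathcal{A}$'s hash response to \spotrv. I would discharge this in the style of Brands--Chaum distance bounding: \spotrv calibrates $\Delta T$ to match the intra-venue wireless round-trip plus a small, empirically measured slack, so that any loop leaving the venue accrues propagation, MAC contention, and buffering delay strictly exceeding that slack. The response then arrives after \spotrv's timer has expired, \spotrv aborts, no token is signed, and the adversary's downstream share from $S$ -- however it was minted, since $\mathcal{A}$ controls $S$ -- cannot be matched to any record of a successful $Spoter$ run at \spotrv, so the subsequent $CheckIn$ at $V$ is rejected.

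Combining the negligible random-oracle/guessing bound with the distance-bounding argument yields overall negligible success probability, which is exactly the statement of Theorem~\ref{thm:loc}. The harder-to-justify piece is the quantitative claim that $\Delta T$ can simultaneously be large enough to accommodate legitimate in-venue latency jitter and tight enough to rule out external relays; I would defer the empirical justification to the timing measurements of Section~\ref{sec:eval} and keep the proof at the level of reducing location correctness to that calibration assumption plus the hash/signature primitives.
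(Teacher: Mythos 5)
Your proposal follows essentially the same two-pronged argument as the paper: a lone remote attacker cannot complete \spotrv's challenge/response (so no signed token is issued), and relay/wormhole attacks are defeated by the round-trip timing check, with the calibration of $\Delta T$ deferred to the measurements in Section~\ref{sec:eval}. Your version is somewhat more careful than the paper's -- in particular about what ``successful completion of $Spoter$'' means when the adversary controls $S$ and can mint signed shares itself -- but the decomposition and the reliance on the empirical timing argument are the same.
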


\begin{proof}
The user's location is verified in the $Spoter$ protocol. A single malicious
user, not present at venue $V$, is unable to establish a connection with the
device deployed at $V$, \spotrv. Thus, the user is unable to participate in the
challenge/response protocol and receive at its completion a provider signed
share of the Benaloh secret key. Without the share, the user is unable to
initiate the $CheckIn$ protocol.  Two (or more) attackers can launch wormhole
attacks: one attacker present at $V$, acts a a proxy and relays information
between \spotrv and a remote attacker. This may allow the remote attacker to
successfully run $Spoter$ and $CheckIn$ at $V$.  In Section~\ref{sec:eval} we
present experimental proof that $Spoter$ detects wormhole attacks.
\end{proof}

\begin{thm}
\profilr is LCP correct.
\end{thm}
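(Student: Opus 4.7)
The plan is to reduce LCP correctness directly to the soundness clause of Theorem~\ref{thm:zk}. The key observation is that in \textbf{CheckIn}, \spotrv replaces the stored set $C_{i-1}$ with the claimed new set $C_i$ \emph{only if} the zero knowledge protocol ZK-CTR(i) between $U$ and \spotrv verifies. Consequently, any successful run of $CheckIn$ in which the posted set $C'_V = C_i$ satisfies $C_i \notin \bar{\mathcal{C}}_{i-1}$ immediately yields a prover that convinces \spotrv on an input that lies outside the protocol's accepting language.

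The first step I would take is to unpack the LCP correctness game: the adversary $\mathcal{A}$ controls the provider $S$ and the user $U$, while $\mathcal{C}$ controls $V$. I would argue that the extra power $\mathcal{A}$ gets from controlling $S$ is irrelevant to the ZK-CTR verification, because ZK-CTR is executed solely between $U$ and \spotrv; the only place $S$'s input enters $CheckIn$ is the verification of the signed share $p_i$, which is orthogonal to the check on whether $C_i$ is a legal re-encryption/increment of $C_{i-1}$. Thus a violation of LCP correctness induces, with the same probability, a cheating prover in ZK-CTR(i) on the pair $(C_{i-1}, C_i)$ with $C_i \notin \bar{\mathcal{C}}_{i-1}$.

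The second step is to invoke the soundness part of Theorem~\ref{thm:zk}. That result establishes that in any single round of ZK-CTR, a cheating prover must commit to one of two strategies (being prepared for $a=0$ or for $a=1$), and in either case is caught unless $c'_j = c_j + zk$ for some integer $z \neq 0$, which is impossible in $\mathbb{Z}^*_k$. Hence each of the $s$ rounds passes with probability at most $1/2$, and the whole protocol accepts with probability at most $1/2^s$, which is negligible for the chosen $s$. Therefore $CheckIn$ succeeds with only negligible probability whenever $C'_V \notin \bar{C}_V$.

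The main obstacle is not a calculation but a modeling subtlety: making explicit that $\mathcal{A}$'s control of $S$ (arbitrary key generation in $Setup$, arbitrary signatures on key shares, etc.) cannot be leveraged to fool the ZK-CTR verifier. I would handle this by noting that ZK-CTR's soundness as proved in Theorem~\ref{thm:zk} is statement-soundness against an \emph{arbitrary} prover and does not rely on any assumption about how the public Benaloh parameters $(n,y)$ were sampled beyond the fact that $k \nmid c'_j - c_j$ for any nonzero admissible difference in $\mathbb{Z}^*_k$; this condition is built into the cryptosystem's message space and holds regardless of how $\mathcal{A}$ chose $n$ during $Setup$. With this observation in place, the theorem follows immediately from Theorem~\ref{thm:zk}.
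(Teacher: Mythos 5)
Your core argument --- that a successful $CheckIn$ producing $C_i \notin \bar{\mathcal{C}}_{i-1}$ immediately yields a cheating prover in ZK-CTR(i), so that LCP correctness inherits the $1/2^s$ soundness bound of Theorem~\ref{thm:zk} --- is exactly the first, and main, half of the paper's proof. One omission relative to the paper: its proof also treats a second attack, in which a user passes ZK-CTR but submits a malformed key share $p_i$, so that \spotrv later cannot reconstruct the private Benaloh key and decrypt the tally; the paper dismisses it because shares carry the provider's signature. Whether that attack falls inside the formal LCP-correctness game (which speaks only of the counter set immediately after $CheckIn$) is debatable, but if one reads ``counters'' as the plaintext values eventually recovered by $PubStats$, your proof is incomplete without addressing it.

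The more substantive issue is in the subtlety you add. You are right to worry that $\mathcal{A}$ controls $S$ and hence the Benaloh key generation --- the paper's sketch silently skips this --- but your resolution is too quick. The soundness argument of Theorem~\ref{thm:zk} hinges on $y$ not being a $k$-th residue modulo $n$ (the key-generation condition $y^{(p-1)(q-1)/k} \bmod n \neq 1$), not merely on the structure of the message space $\mathbb{Z}^*_k$. If a malicious $S$ instead publishes $y = w^k$, then $E(u,m) = y^m u^k = (w^m u)^k$ is a valid encryption of $0$ for every $m$, every ciphertext is a legal re-encryption of every other, and a prover colluding with $S$ (who knows $w$) can answer \emph{both} challenge bits for an arbitrary $C_i$; all counters then decrypt to $0$ in $PubStats$. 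So the claim that soundness ``holds regardless of how $\mathcal{A}$ chose $n$'' is false as stated: you need either an assumption that the public parameters are well formed (verified by $V$, or a non-collusion assumption between users and $S$) or an explicit parameter check. To be fair, the paper's own proof never confronts this either, but since you raised the point you must close it correctly.
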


\begin{proof}(Sketch)
A user $U$ can alter the LCP of a venue $V$ in two ways.  First,
during the ZK-CTR protocol, it modifies more than one counter or corrupts (at
least ) one counter. The soundness property of ZK-CTR, proved in
Theorem~\ref{thm:zk} shows this attack succeeds with probability $1/2^s$.
Second, it attempts to prevent $V$ from decrypting the counter sets after $k$
users have run CheckIn. This can be done by preventing \spotrv from
reconstructing the private Benaloh key. Key shares are however signed by the
provider, allowing \spotrv to detect invalid shares.
\end{proof}

\begin{thm}
\profilr provides CI-IND.
\end{thm}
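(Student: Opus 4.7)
(Sketch)
The plan is to show that every observable the adversary obtains during the final run of $Spoter$ followed by $CheckIn$ on behalf of $U_b$ is computationally independent of $b$, so that the guess $b'$ cannot exceed $1/2$ by more than a negligible amount. The proof proceeds by a standard hybrid argument that replaces, one component at a time, all pieces of the transcript that depend on the identity (and hence the profile values) of the checking-in user by quantities that are information-theoretically independent of $b$.

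First I would dispense with the network-layer observables: because $Spoter$ and $CheckIn$ are executed over an anonymous channel with fresh, random MAC/IP pseudonyms, and because the final venue has not been used before, the adversary cannot correlate the final session with any of the $q$ prior sessions at other venues. Next I would examine the $Spoter$ transcript: the user's reply is a hash of $(T,\Delta T,R)$, values chosen by the venue and independent of the user's profile; the only subsequent user-originated message is the signed token forwarded to $S$ via $Mix$, which the venue learns nothing about. The provider-signed share that is returned is the share $p_1$ corresponding to the first position in the cycle at this fresh venue, and is therefore identical in distribution whether the user is $U_0$ or $U_1$.

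The only remaining user-dependent parts of the transcript are (i) the freshly re-encrypted and incremented counter set $C_1$ that $U$ sends in $CheckIn$ and (ii) the ZK-CTR transcript. For (i) I would reduce CI-IND to the semantic security of Benaloh: since the fresh venue holds only the public key and a single key share (well below the threshold $k$), the $k$-privacy analysis of the previous theorem already establishes that an encryption of the incremented counter is computationally indistinguishable from an encryption of any other value; therefore swapping ``increment counter $j_0$'' for ``increment counter $j_1$'' induces only a negligible change in the distribution of $C_1$ seen by the adversary. For (ii) I would invoke Theorem~\ref{thm:zk}: since ZK-CTR is zero-knowledge, there is a simulator that produces a transcript with the same distribution as the real protocol execution without using the witness (i.e., without knowing which counter was incremented), so the ZK-CTR conversation can be replaced by its simulation in a hybrid that is indistinguishable from the real game; in the simulated world the transcript is manifestly independent of $b$.

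Chaining the hybrids -- real game with $U_0$, simulate ZK-CTR, swap $C_1$ under Benaloh semantic security, un-simulate ZK-CTR, real game with $U_1$ -- shows that $|\Pr[b'=b \mid b=0]-\Pr[b'=b \mid b=1]|$ is bounded by the sum of (a) the simulator's distinguishing gap, (b) a negligible Benaloh semantic-security term, and (c) the anonymizer's distinguishing gap, each of which is negligible. Hence $Adv(\mathcal{A})$ is negligible. The main obstacle I anticipate is bookkeeping for the $q$ training check-ins: one has to argue that they contribute no useful side information, which follows because each of them is at a venue with independent keys and uses fresh pseudonyms, so conditioning on them leaves the distribution of the final session's transcript unchanged up to the same negligible terms above.
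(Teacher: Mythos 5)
Your proposal is correct and is in fact considerably more complete than the paper's own argument, which is a two-sentence sketch asserting that no identifying information is sent during $Spoter$ and $CheckIn$ (pseudonyms are blindly signed, all provider communication goes over $Mix$). You go further by identifying the real crux that the paper's sketch glosses over: the $CheckIn$ message \emph{does} depend on the user's identity through the index $j$ of the incremented counter, so one genuinely needs (a) Benaloh semantic security to argue that $C_1$ with counter $j_0$ incremented is indistinguishable from $C_1$ with counter $j_1$ incremented (legitimate here because the adversarial venue holds fewer than $k$ key shares), and (b) the zero-knowledge property of ZK-CTR from Theorem~\ref{thm:zk} to replace the proof transcript by a simulation independent of $j$. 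Chaining these as hybrids is the right way to make the claim precise, and your handling of the $q$ training sessions is sound for the base protocol. The one point the paper includes that you omit is its explicit caveat about the Sybil-prevention extension: there the pseudonym $R_{U,e}$ is \emph{fixed per user per epoch} (not fresh per session), so if the challenger runs the challenge check-in in the same epoch as a training check-in by the same user, the adversary links the two sessions trivially; the theorem only holds under the assumption that no (user, epoch) pair is reused. Your phrase ``fresh, random MAC/IP pseudonyms'' is accurate only for the pseudonym-free variant, so you should either restrict to that variant or add the paper's assumption.
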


\begin{proof}(Sketch)
Let $\mathcal{A}$ be an adversary that has an $\epsilon$ advantage in the
CI-IND game. We assume the challenger does not run $Spoter$ and $CheckIn$ twice
for the same (user, epoch) pair -- otherwise the use of the signed pseudonyms
provides an advantage to $\mathcal{A}$. Note that if pseudonyms are not used,
this requirement is not necessary. Moreover, no identifying information is
sent by users during $Spoter$ and $CheckIn$: the pseudonyms are \textit{blindly}
signed by $S$, and all communication with $S$ takes places over $Mix$.
\end{proof}

\section{Snapshot LCP}
\label{sec:realtime}


We extend \profilr to allow not only venues but also users to collect
\textit{snapshot} LCPs of other, co-located users.  To achieve this, we take
advantage of the ability of most modern mobile devices (e.g., smartphones,
tablets) to setup ad hoc networks.  Devices establish local connections with
neighboring devices and privately compute the instantaneous aggregate LCP of
their profiles.

\subsection{Snapshot \profilr}

We assume a user $U$ co-located with $k$ other users $U_1,..,U_k$. $U$ needs to
generate the LCP of their profiles, without infrastructure, GSN provider or
venue support. An additional difficulty then, is that participating users need
assurances that their profiles will not be revealed to $U$. However, one
advantage of this setup is that location verification is not needed: $U$
intrinsically determines co-location with $U_1,..,U_k$. Snapshot \profilr
consists of three protocols, $\{Setup, LCPGen, PubStats\}$:

\paragraph*{\bf{Setup}($U(r),U_1,..,U_k$())}:
$U$ performs the following steps:


\begin{icompact}

\item
Run the key generation function $K(r)$ of the Benaloh cryptosystem (see
Section~\ref{sec:model:tools}). Send the public key $n$ and $y$ to each user
$U_1,..,U_k$.

\item
Engage in a multi-party secure function evaluation protocol~\cite{B00,JJ00}
with $U_1,..,U_k$ to generate shares of a public value $R < n$. At the end of
the protocol, each user $U_i$ has a share $R_i$, such that $R_1 .. R_k = R\
mod\ n$ and $R_i$ is only known to $U_i$.

\item
Assign each of the $k$ users a unique label between 1 and $k$. Let
$U_1,..,U_k$ denote this order.

\item
Generate $C_0=\{E(x_1,x'_1,0,1),..,E(x_b,x'_b,0,b)\}$, where $x_i,x'_i$,
$i=1..b$ are randomly chosen. Store $C_0$ indexed on dimension $D$.

\end{icompact}


\noindent
Each of the $k$ users engages in a 1-on-1 $LCPGen$ with $U$ to
privately and correctly contribute her profile to $U$'s LCP.

\paragraph*{\bf{LCPGen}($U(C_{i-1}),U_i()$)}:
Let $C_{i-1}$ be the encrypted counters after $U_1,..,U_{i-1}$ have completed
the protocol with $U$. $U$ sends $C_{i-1}$ to $U_i$. $U_i$ runs the following:


\begin{icompact}

\item
Generate random values $(v_1,v'_1),..,(v_b,v'_b)$.  Let $j$ be the index of the
range where $U_i$ fits on dimension $D$.

\item
Compute the new encrypted counter set $C_i$ as:
$C_i = \{ RE(v_l,v'_l,C_{i-1}[l]) R_i\ mod\ n | l=1..b,l \neq j \}$ $\cup$
$RE(v_j,v'_j,C_{i-1}[j]++) R_i\ mod\ n \}$ and send it to $U$.

\item
Engage in a ZK-CTR protocol to prove that $C_i \in \bar{C}_{i-1}$. The
only modification to the ZK-CTR protocol is that all re-encrypted values are
also multiplied with $R_i\ mod\ n$, $U_i$'s share of the public value $R$. If
the proof verifies, $U$ replaces $C_{i-1}$ with $C_i$.

\end{icompact}


\noindent
After completing $LCPGen$ with $U_1,..,U_k$, $U$'s encrypted
counter set is $C_k = \{E_j = E(u_j,u'_j,c_j,j) R_1 .. R_k | j = 1 ..d\}$,
where $u_j$ and $u_j'$ are the product of the obfuscation factors used by
$U_1,..,U_k$ in their re-encryptions. The following protocol enables $U$
to retrieve the snapshot LCP.

\paragraph*{\bf{PubStats}($U(C_k$))}:
Compute $E_j K$, $\forall j=1..d$, where $K = R^{-1}\ mod\ n$ ($R = R_1 ..
R_k$), decrypt the outcome using the private key ($p$, $q$) and publish the
resulting counter value.

\noindent
Even though $U$ has the private key allowing it to decrypt any Benaloh
ciphertext, the use of the secret $R_i$ values prevents it from
learning the profile of $U_i$, $i=1..k$.



\section{iSafe: Context Aware Safety}
\label{sec:isafe}

\begin{figure}
\centering
\includegraphics[width=2.3in]{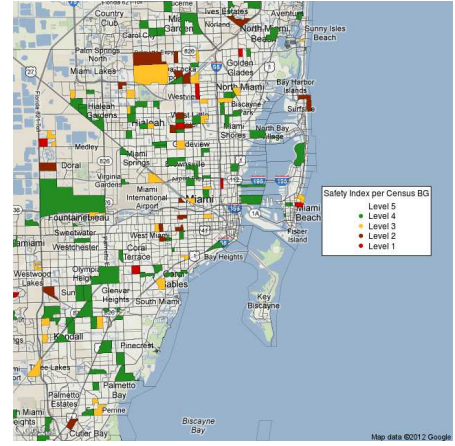}
\caption{Static crime indexes computed over crimes reported during 2010 in the
Miami-Dade county.
\label{fig:crime:index}}
\end{figure}

We introduce iSafe, an application built on \profilr.  iSafe uses the context
of users, in terms of their location, time, other people present, to build a
\textit{safety} representation.  Quantifying the safety of a user based on her
current context can be further used to provide safe walking directions and
context-aware smartphone authentication protocols (i.e., more complex
authentication protocols in unsafe locations).
iSafe combines information collected from Yelp with Census~\cite{censusdata}
and historical crime databases as well as context collected by the users'
mobile devices. We have access to the Miami-Dade county~\cite{crimeterrafly}
area crime and Census datasets since 2007. Each record in the crime dataset is
labeled with a crime type (e.g., homicide, larceny, robbery) as well as the
geographic location and time of occurrence.

iSafe assigns static safety labels to Census-defined geographic blocks.  While
beyond the scope here, we note that the safety index is inversely proportional
to the weighted average of the crimes committed in the block.
Figure~\ref{fig:crime:index} shows the color-coded safety index for each block
group in the Miami-Dade county (FL) in 2010.
%
iSafe uses the static block safety indexes to compute safety labels of mobile
users. The safety label of a user is an average over the safety indexes of the
blocks visited by the user. Blocks visited more frequently, have an inherently
higher impact on the user's safety label. Block and user safety labels take
values in the $[0,1]$ interval; 1 is the safest label.

iSafe uses \profilr to privately compute the safety labels for Yelp venues: the
distribution of safety indexes of users that reviewed them. To achieve this,
iSafe divides the $[0,1]$ safety range into a discrete set of disjoint
sub-intervals, and assigns a counter to each sub-interval. Each venue privately
retrieves the distribution of the safety values of its reviewers (the counters
of users fitting the corresponding sub-intervals). Finally, the safety index of
the venue is the weighted average of the aggregated counts. The normalized
weights are either the upper bound value or the middle point of their
corresponding sub-intervals.

Besides this venue-centric approach, iSafe also uses snapshot \profilr to
privately aggregate the safety labels of co-located user devices and
distributively obtain the real-time image of the safety of their location.

\subsection{Implementation}
\label{sec:implementation}

We implemented iSafe as a (i) web server, (ii) a browser
plugin running in the user's browser and (iii) a mobile application.  We use
Apache Tomcat 6.0.35 to route requests (exposed to the client through a REST
API interface) to our server-side component. The server-side component relies
on the latest servlet v3.0 which offers additional features including
asynchronous support, making the server-side processing much more efficient. We
implemented the browser plugin for the Chrome browser using HTML, CSS and
Javascript. The plugin interacts with Yelp pages and the web server, using
content scripts (Chrome specific components that let us access the browser's
native API) and cross-origin XMLHttpRequests.

\begin{figure}
\centering
\includegraphics[width=3.3in]{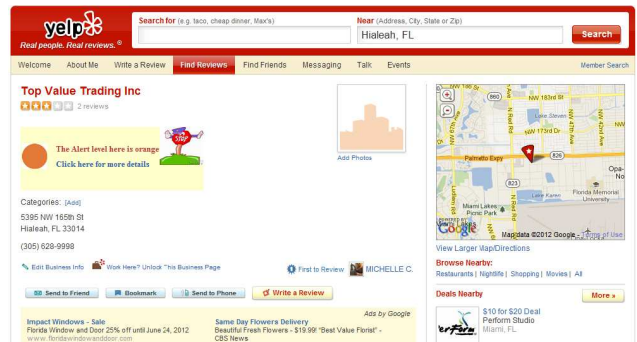}
\caption{Snapshot of iSafe's plugin functionality for a Yelp venue.
The orange circle indicates the venue's safety level.
\label{fig:isnapshot}}
\end{figure}

The browser plugin becomes active when the user navigates to a Yelp page.
For user and venue pages, the plugin parses their HTML file and retrieves their
reviews. We employ a stateful approach, where the server's DB stores all
reviews of pages previously accessed by users. This enables significant time
savings, as the plugin needs to send to the web server only reviews written
after the date of the last user's access to the page.
Given the venue's set of reviews, the server determines the corresponding
reviewers. Since we do not have access to the location history of users, to
compute a user's security label we rely on the venues reviewed by the user: The
user safety is computed as an average over the safety labels of the blocks
containing the venues reviewed by the user.
Given the safety labels of reviewers, we run \profilr to determine their
distribution and identify the safety level of the venue.  The server sends back
the safety level of the venue, which the plugin displays in the browser.
Figure~\ref{fig:isnapshot} shows iSafe's extension to the Yelp page of
the venue ``Top Value Trading Inc.'' in Hialeah, FL (central left yellow
rectangle containing iSafe's safety recommendations).

We have also implemented an Android front-end for iSafe's snapshot LCPs.  We
used the standard Java security library to implement the cryptographic
primitives employed by \profilr.  For secret sharing, we used Shamir's scheme
and for digital signatures we used RSA. We also used the kSOAP2 library to
enable SOAP functionality on the Android app. Figure~\ref{fig:isafe} shows a
snapshot of the iSafe Android app on a Samsung Admire smartphone. We used the
Google map API to facilitate the location based service employed by our
approach.

\begin{figure}
\centering
\subfigure[]
{\label{fig:movil1}{\includegraphics[width=1.4in,height=2.3in]{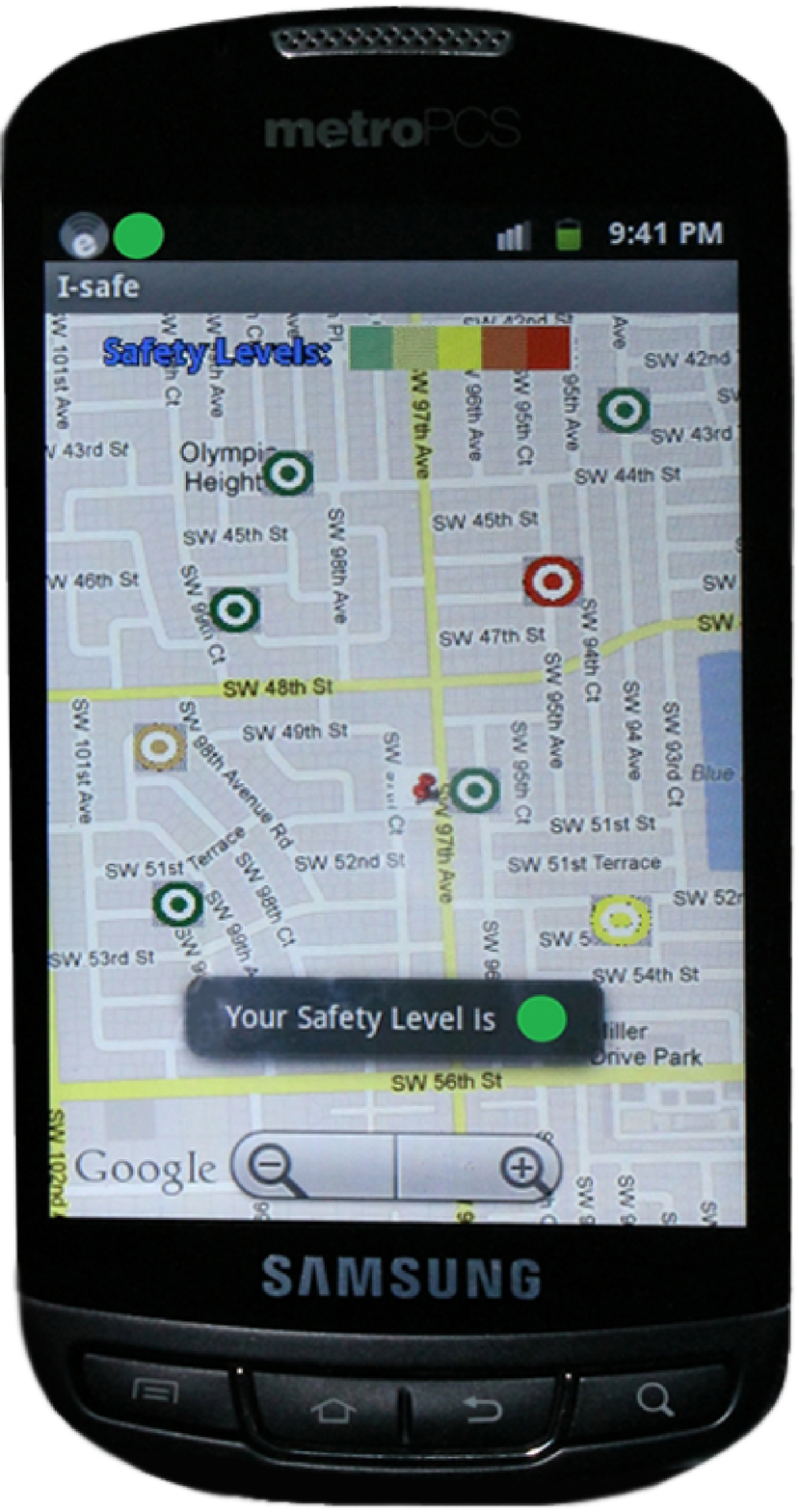}}}
\subfigure[]
{\label{fig:movil2}{\includegraphics[width=1.5in,height=2.3in]{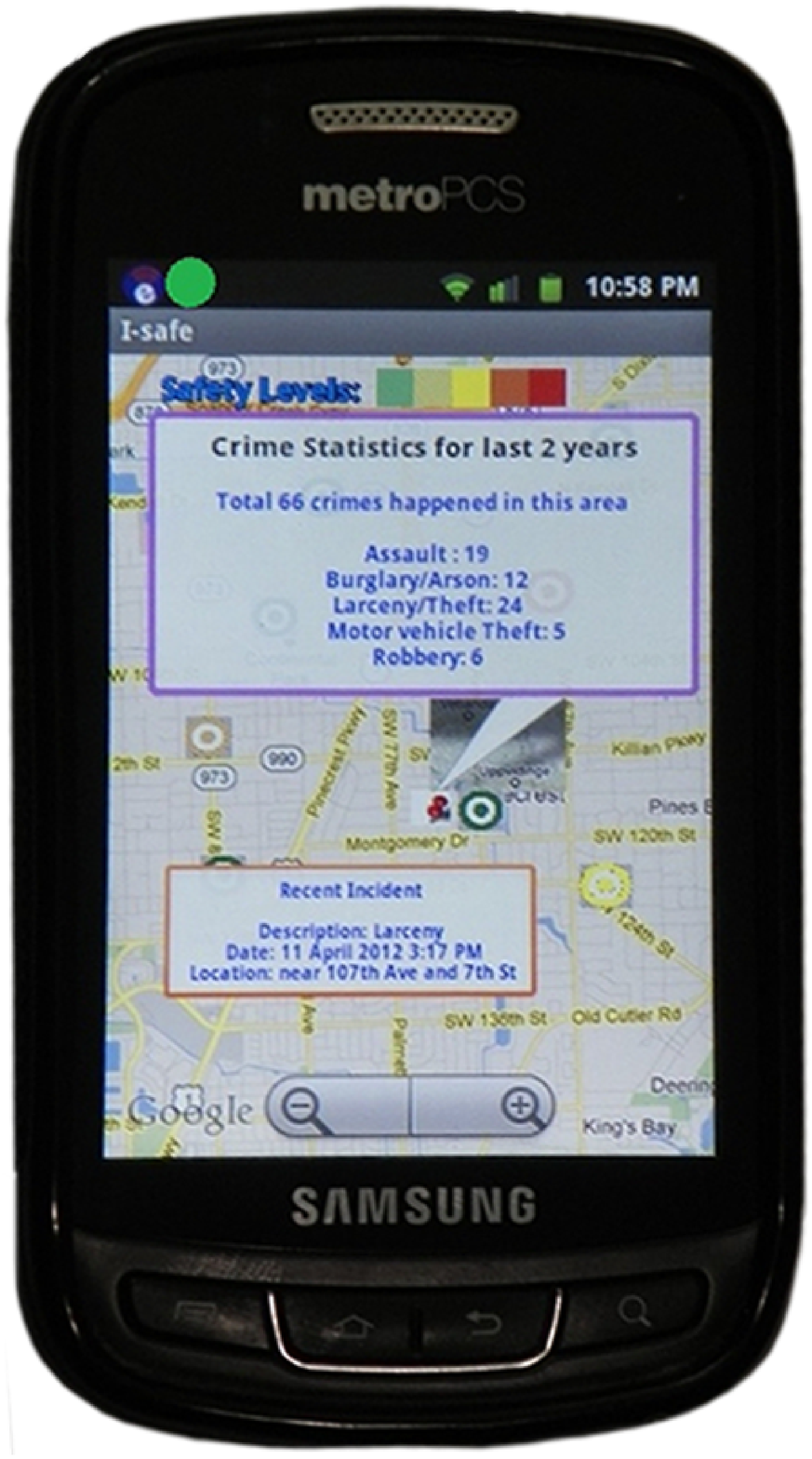}}}
\caption{Snapshots of iSafe on Android.}
\label{fig:isafe}
\end{figure}


\section{Evaluation}
\label{sec:eval}

For testing purposes we have used Samsung Admire smartphones running Android OS
Gingerbread 2.3 with a 800MHz CPU and a Dell laptop equipped with a 2.4GHz
Intel Core i5 processor and 4GB of RAM for the server. For local connectivity
the devices used their 802.11b/g Wi-Fi interfaces. All reported values are
averages taken over at least 10 independent protocol runs.

\noindent
{\bf iSafe:}
Figure~\ref{fig:isafe:collect} shows the overhead of the iSafe plugin when
collecting the reviews of a venue browsed by the user, as a function of the
number of reviews the venue has. It includes the cost to request each review
page, parse and process the data for transfer. The experiments were performed
on the Dell laptop. It exhibits a sub-linear dependence on the number of
reviews of the venue (under 1s for 10 reviews but under 30s for 4000 reviews),
showing that Yelp's delay for successive requests decreases. While even for 500
reviews the overhead is less than 5s, we note that this cost is incurred only
once per venue.  Subsequent accesses to the same venue, by any other user will
no longer incur this overhead.

\begin{figure*}
\centering
\subfigure[]
{\label{fig:isafe:collect}{\includegraphics[width=2.1in]{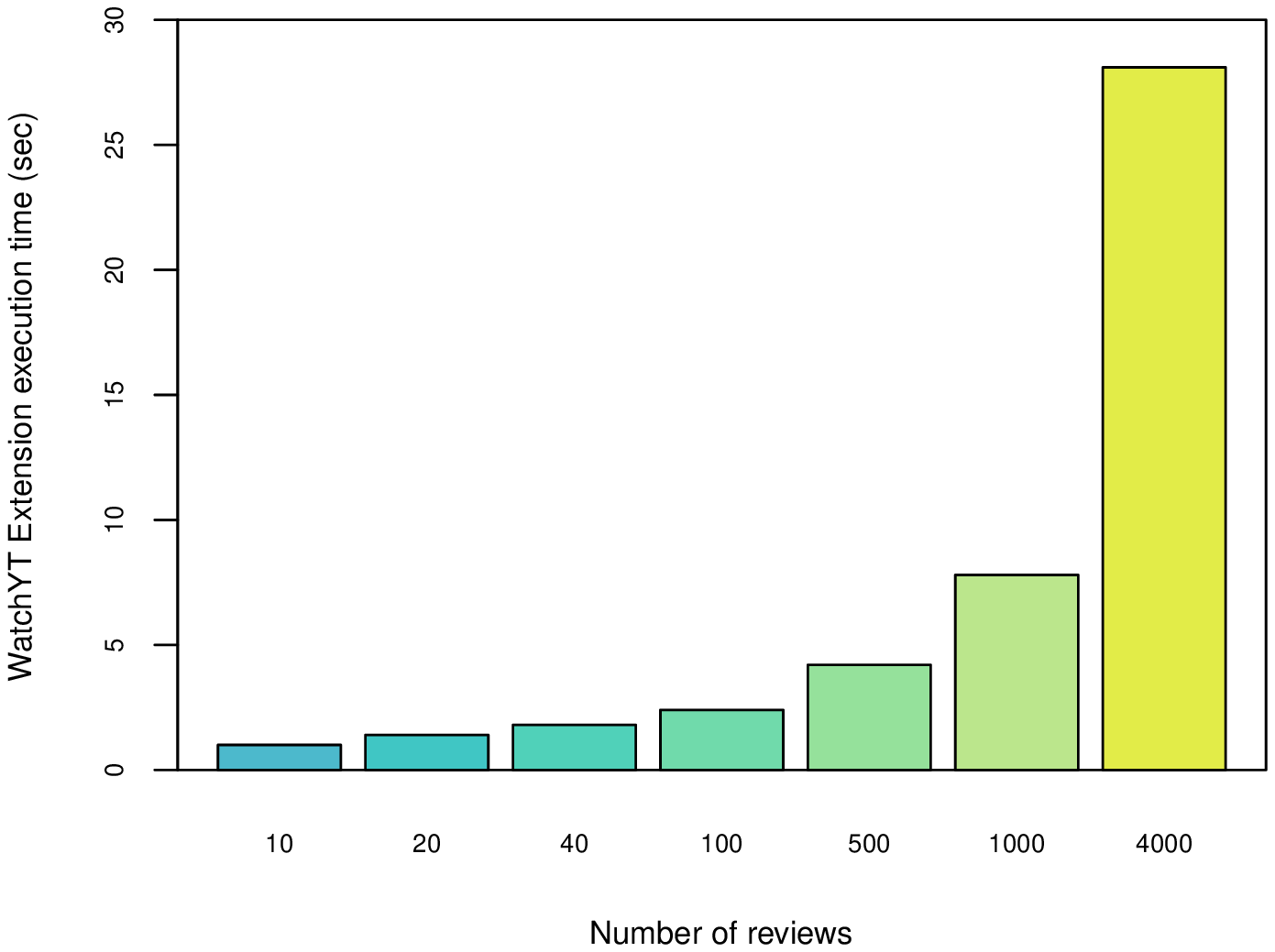}}}
\subfigure[]
{\label{fig:setup}{\includegraphics[width=2.1in]{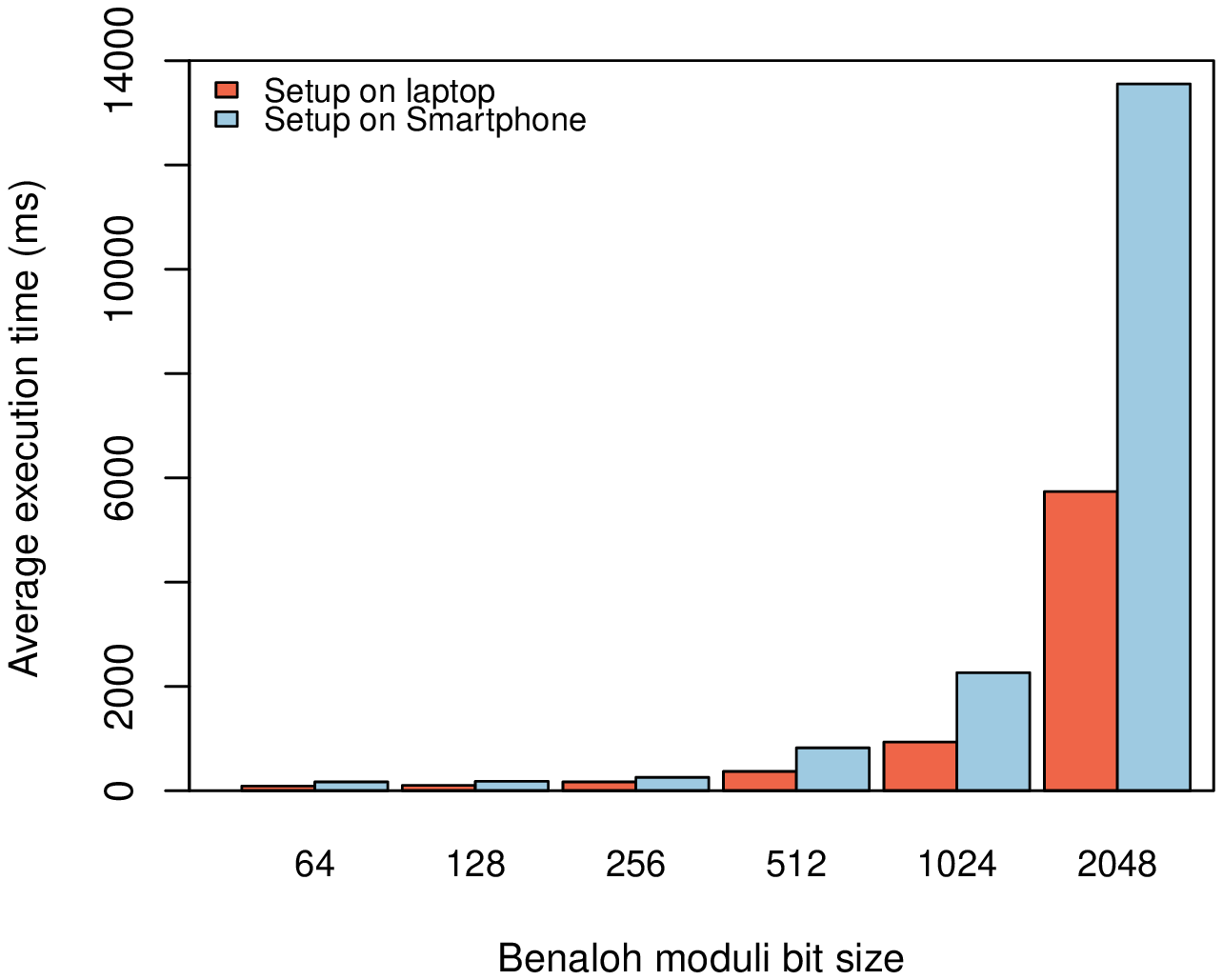}}}
\subfigure[]
{\label{fig:storage}{\includegraphics[width=1.9in]{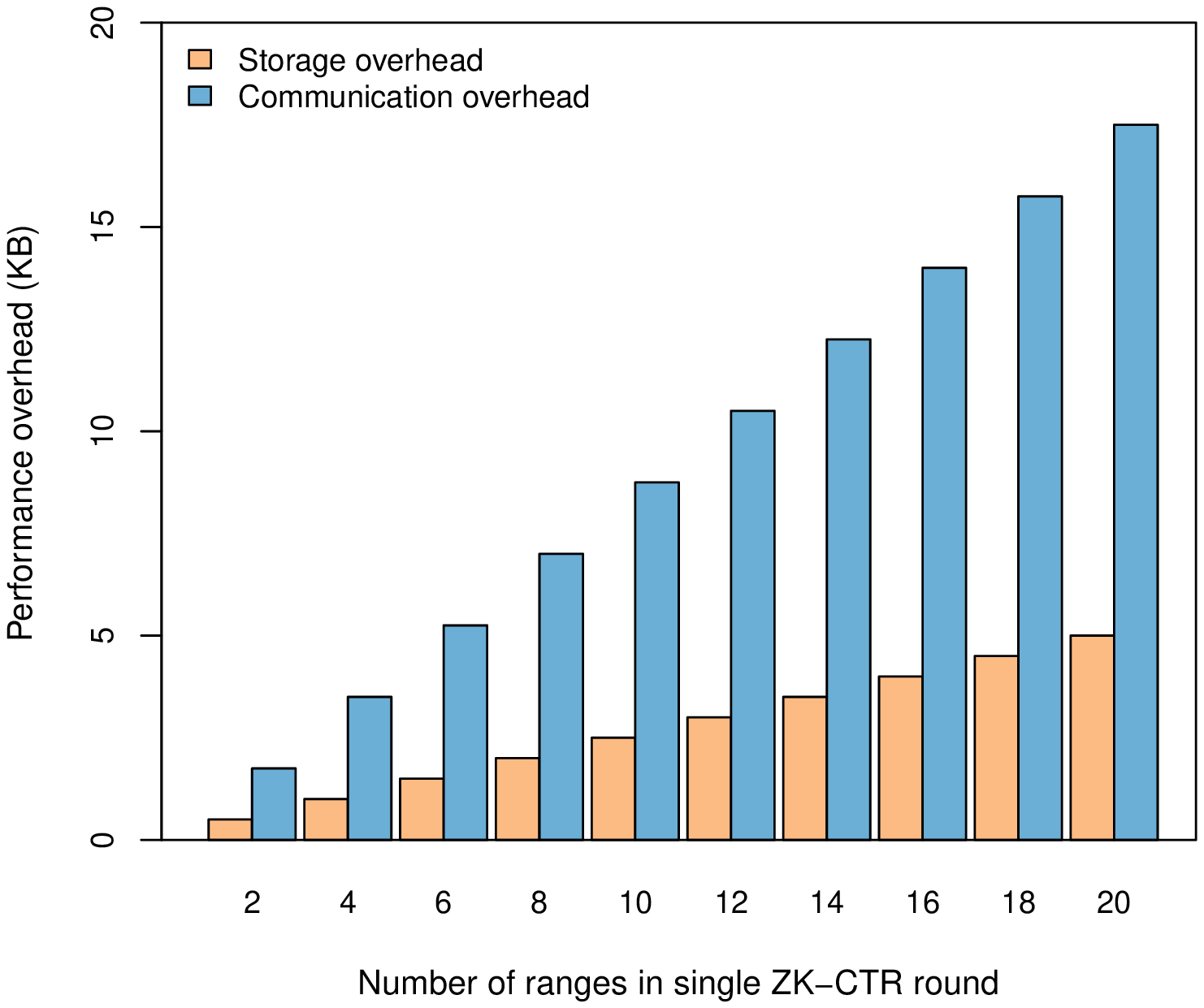}}}
\caption{(a) iSafe browser plugin overhead: Collecting reviews from venues, as a
function of the number of reviews.
(b) $Setup$ dependence on Benaloh modulus size.
(c) Storage and communication overhead (in KB) as a function of range count.}
\end{figure*}

\noindent
{\bf $Spoter$'s wormhole defenses:}
Wormhole attacks are best detected through timing analysis.  We have tested
Spoter using a smartphone connected over ad hoc Wi-Fi to the laptop. The
round-trip Wi-Fi latency is under 3ms.  On the Android device, the time
required to compute a (SHA-512) hash is 0.6ms.  The overhead imposed by
$Spoter$ on a wormhole attack is the Wi-Fi round-trip latency, plus the hash
time (0.003ms on the laptop operations), plus the wired round-trip
communication latency. The one-way communication overhead between the two
attackers, if performed over the wired network, is at least 19ms (we tested
with systems in Miami, San Francisco and Chicago). In total, $Spoter$ imposes
an overhead on a wormhole attack (43ms) that is almost 12 times the overhead
imposed on an honest user (3.6ms). Thus, wormhole attacks are easily detectable
in $Spoter$.


\subsection{\profilr Evaluation}


We have first measured the overhead of the $Setup$ operation.  We set the
number of ranges of the domain $D$ to be 5,  Shamir's TSS group size to 1024 bits
and RSA's modulus size to 1024 bits.  Figure~\ref{fig:setup} shows the $Setup$
overhead on the smartphone and laptop platforms, when the Benaloh modulus size
ranges from 64 to 2048 bits. Note that even a resource constrained smartphone
takes only 2.2s for 1024 bit sizes (0.9s on a laptop). A marked increase can
be noticed for the smartphone when the Benaloh bit size is 2048 bit long - 13.5s.
We note however that this cost is amortized over multiple check-in runs.

\begin{figure}
\centering
\subfigure[]
{\label{fig:zk:bitsize}{\includegraphics[width=1.63in,height=1.35in]{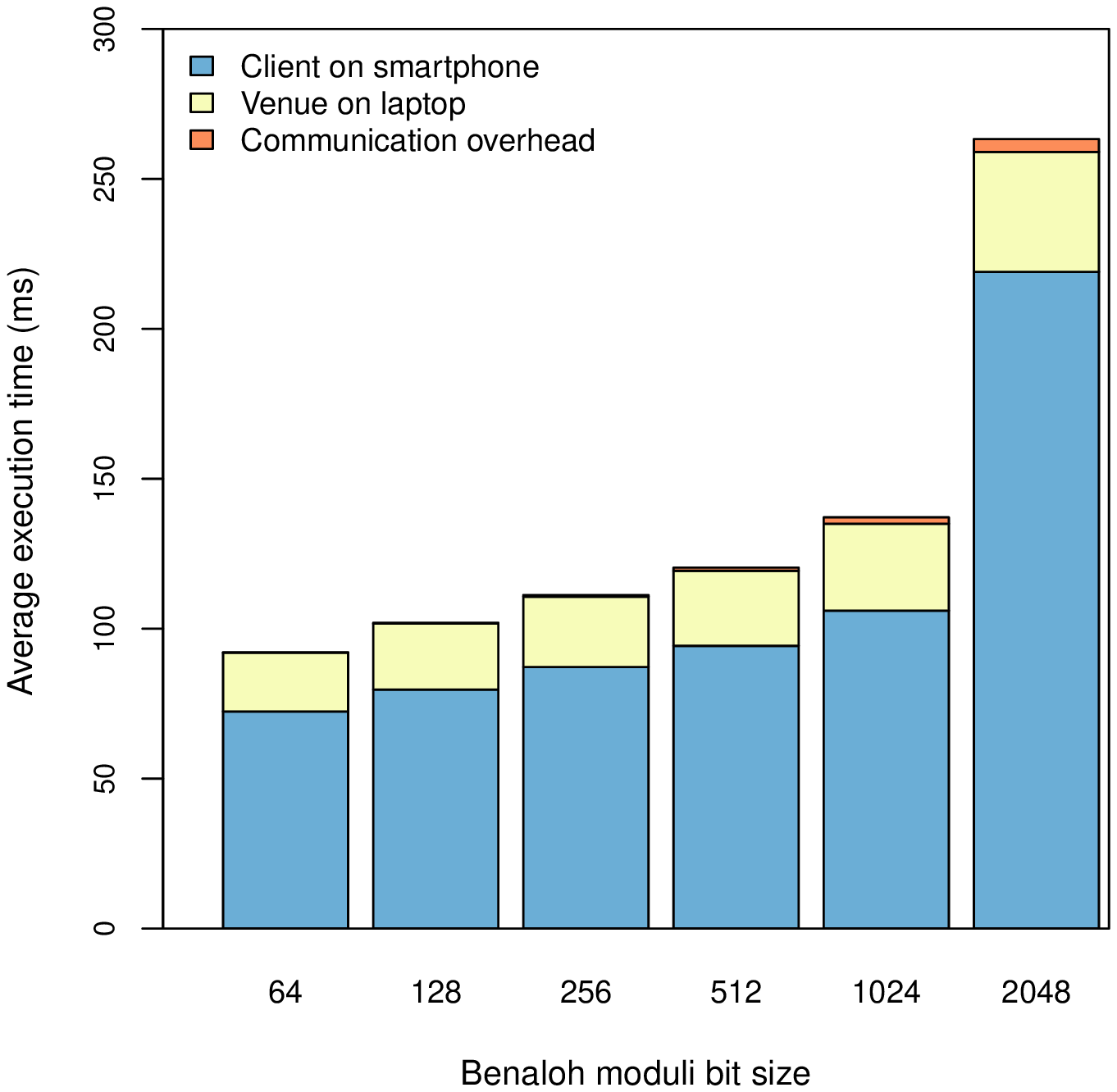}}}
\subfigure[]
{\label{fig:zk:rounds}{\includegraphics[width=1.63in,height=1.35in]{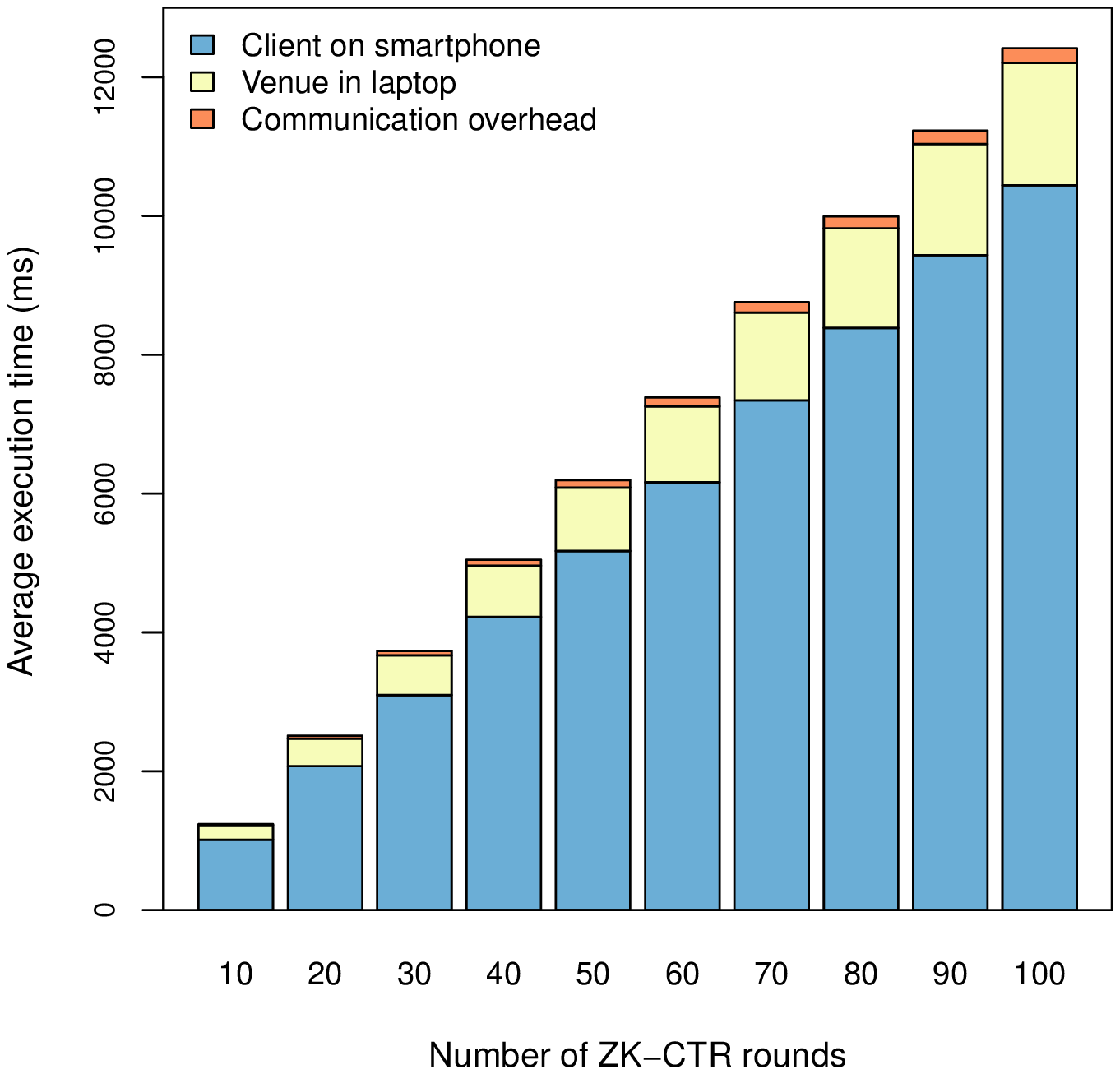}}}
\caption{ZK-CTR Performance:
(a) Dependence on the Benaloh modulus size.
(b) Dependence on the number of proof rounds.}
\end{figure}


We now focus on the most resource consuming component of \profilr: the ZK-CTR
protocol. We measure the client and venue (\spotrv) computation overhead as
well as their communication overhead. We set the number of sub-ranges of domain
$D$ to 5. We tested the client side running on the smartphone and the venue
component executing on the laptop. Figure~\ref{fig:zk:bitsize} shows the
dependence of the three costs for a single round of ZK-CTR on the Benaloh
modulus size. Given the more efficient venue component and the superior
computation capabilities of the laptop, the venue component has a
much smaller overhead. The communication overhead is the smallest, exhibiting a
linear increase with bit size. For a Benaloh key size of 1024 bits, the average
end-to-end overhead of a single ZK-CTR round is 135ms. The venue component is
29ms and the client component is 106ms. Furthermore,
Figure~\ref{fig:zk:rounds} shows the overheads of these components as a
function of the number of ZK-CTR rounds, when the Benaloh key size is 1024 bit
long. For 30 rounds, when a cheating client's probability of success is $2^{-30}$,
the total overhead is 3.6s.



We further examine the communication overhead in terms of bits transferred
during ZK-CTR between a client and a venue. Let $N$ be the Benaloh modulus size
and $B$ the sub-range count of domain $D$. The communication overhead in a
single ZK-CTR round is $4BN + 3BN = 7BN$. The second component of the sum is
due to the average outcome of the challenge bit. Figure~\ref{fig:storage} shows
the dependency of the communication overhead (in KB) on $B$, when $N=1024$.
Even when $B=20$, the communication overhead is around 17KB.
Figure~\ref{fig:storage} shows also the storage overhead (at a venue). The
storage overhead is only a fraction of the (single round) communication
overhead, $2BN$. For a single dimension, with 20 sub-ranges, the overhead is
5KB.


\section{Related Work}
\label{sec:related}


Golle et al.~\cite{GMM06} proposed techniques allowing pollsters to collect
user data while ensuring the privacy of the users. The privacy is proved at
``runtime'': if the pollster leaks private data, it will be exposed
probabilistically. Our work also allow entities to collect private user data,
however, the collectors are never allowed direct access to private user data.

Toubiana et. al~\cite{Adnostic} proposed Adnostic, a privacy preserving ad
targeting architecture. Users have a profile that allows the private matching
of relevant ads.  While \profilr can be used to privately provide location
centric targeted ads, its main goal is different - to compute location
(venue) centric profiles that preserve the privacy of contributing users.

Manweiler et al.~\cite{MSC09} proposed SMILE, a privacy-preserving
``missed-connections'' service similar to Craigslist, where the service provider
is untrusted and users do not have existing relationships. The solution is
distributed, allowing users to anonymously prove to each other the existence of
a past encounter. While we have a similar setup, our work addresses a different
problem, of privately collecting location centric user profile aggregates.

Location and temporal cloaking techniques, or introducing errors in reported
locations in order to provide 1-out-of-k anonymity have been initially proposed
in~\cite{GG03}, followed by a significant body of
work~\cite{HGHBWHBAJ08,OTGH10,PMX09,GDSB09}.  We note that \profilr provides an
orthogonal notion of $k$-anonymity: instead of reporting intervals containing
$k$ other users, we allow the construction of location centric profiles only
when $k$ users have reported their location. Computed LCPs hide the profiles
the users: user profiles are anonymous, only aggregates are available for
inspection, and interactions with venues and the provider are
indistinguishable.

Our work relies on the assumption that participants cannot control a large
number of fake, Sybil accounts. One way to ensure this property is to use
existing Sybil detection techniques. Danezis and Mittal~\cite{DM09} proposed a
centralized SybilInfer solution based in Bayesian inference.  Yu et al.
proposed distributed solutions, SybilGuard~\cite{YKGF06} and
SybilLimit~\cite{YGKX10}, that use online social networks to protect
peer-to-peer network against Sybil nodes. They rely on the fast mixing property
of social networks and the limited connectivity of Sybil nodes to non-Sybil
nodes.

Significant work has been done recently to preserve the privacy of users from
the online social network provider. Cutillo et al.~\cite{CMS09} proposed
Safebook, a distributed online social networks where insiders are protected
from external observers through the inherent flow of information in the system.
Tootoonchian et al.~\cite{TSGW09} proposed Lockr, a system for improving the
privacy of social networks. It achieves this by using the concept of a social
attestation, which is a credential proving a social relationship.  Baden et
al.~\cite{BSB09} introduced Persona, a distributed social network with
distributed account data storage. Sun et al.~\cite{SZF10} proposed a similar
solution, extended with revocation capabilities through the use of broadcast
encryption.
While we rely on distributed online social networks, our goal is to protect the
privacy of users while also allowing venues to collect certain user statistics.


\section{Conclusions}
\label{sec:conclusions}


We have proposed (i) novel mechanisms for building aggregate location-centric
profiles while maintaining the privacy of participating users and ensuring
their honesty during the process and (ii) centralized and distributed,
real-time variants of the solution, along with applications that can benefit
from the construction of such profiles. We have shown that our solutions are
efficient, even when executed on resource constrained mobile devices.


%

\bibliographystyle{unsrt}
\bibliography{anonymous,bogdan,crime,crypto,ecash,foursquare,key,location,osn,sybil,sfe,privacy,zk}

\end{document}